\documentclass[sigconf]{acmart}

\AtBeginDocument{%
  }

\copyrightyear{2026}
\acmYear{2026}
\setcopyright{cc}
\setcctype{by-nc-nd}
\acmConference[MM '26] {Proceedings of the 34th ACM International Conference on Multimedia}{November 10--14, 2026}{Rio de Janeiro, Brazil.}
\acmBooktitle{Proceedings of the 34th ACM International Conference on Multimedia (MM '26), November 10--14, 2026, Rio de Janeiro, Brazil}

\acmISBN{979-8-4007-2213-4/2026/11}
\acmDOI{10.1145/3767308.3835863}

\usepackage{bbding}
\usepackage{amsfonts}

\usepackage{amssymb}

\usepackage{bm}
\usepackage{multirow}

\usepackage{rotating}
\usepackage{amsthm}
\usepackage{color}
\usepackage{stfloats}
\newtheorem{proposition}{Proposition}

\usepackage{colortbl}
\usepackage{pifont}
\usepackage{textcomp}
\usepackage{url}

\hypersetup{hidelinks}
\usepackage[noend]{algorithmic}
\usepackage{enumitem}

\usepackage{bbding}
\usepackage{amsfonts}
\usepackage{amsmath}

\usepackage{amssymb}

\usepackage{graphicx}
\usepackage{float}
\usepackage{bm}
\usepackage{courier}
\usepackage{booktabs}
\usepackage{multirow}
\usepackage{rotating}
\usepackage{amsthm}
\usepackage{footnote}
\usepackage{color}
\usepackage{stfloats}
\usepackage{threeparttable}
\usepackage{adjustbox}

\usepackage{orcidlink}
\usepackage{colortbl}
\usepackage{pifont}
\usepackage{textcomp}
\usepackage{url}

\hypersetup{hidelinks}
\usepackage[noend]{algorithmic}
\usepackage{xcolor}
\usepackage{pgfplots}
\usepgfplotslibrary{groupplots}
\pgfplotsset{compat=1.18}
\usepackage{enumitem}
\usepackage{hyperref}

\usepackage[most]{tcolorbox}

\begin{document}

\title{Concept-Level Risk and Calibration for Governance in Diffusion Foundation Models}



\author{Kun Xu}
\affiliation{%
  \institution{Nanjing University of Aeronautics and Astronautics}
  \city{Nanjing}
  \country{China}}
\email{xukun930@nuaa.edu.cn}

\author{Yushu Zhang}
\authornote{Corresponding author}
\affiliation{%
  \institution{Jiangxi University of Finance and Economics}
  \city{Nanchang}
  \country{China}}
\email{zhangyushu@jxufe.edu.cn}

\author{Tao Wang}
\affiliation{%
  \institution{Nanjing University of Aeronautics and Astronautics}
  \city{Nanjing}
  \country{China}}
\email{wangtao21@nuaa.edu.cn}

\author{Shuren Qi}
\affiliation{%
  \institution{City University of Hong Kong}
  \city{Hong Kong}
  \country{China}}
\email{shurenqi@cityu.edu.hk}

\author{Barbara Carminati}
\affiliation{%
  \institution{University of Insubria}
  \city{Varese}
  \country{Italy}}
\email{barbara.carminati@uninsubria.it}

\author{Elena Ferrari}
\affiliation{%
  \institution{University of Insubria}
  \city{Varese}
  \country{Italy}}
\email{elena.ferrari@uninsubria.it}

\author{Yuming Fang}
\affiliation{%
  \institution{Jiangxi University of Finance and Economics}
  \city{Nanchang}
  \country{China}}
\email{fa0001ng@e.ntu.edu.sg}

\renewcommand{\shortauthors}{Kun Xu et al.}

\begin{abstract}
Diffusion models have become a core paradigm for multimedia generation, offering powerful concept-driven controllability for personalization, semantic editing, and selective unlearning. However, as semantic control extends beyond natural-language prompts to learned embeddings and intervention pipelines, the safety and governance of these systems become increasingly difficult to evaluate in a unified manner, especially for safety-sensitive, identity-linked, and other privacy-relevant concepts. Existing studies mainly rely on heuristic audits, adversarial probing, or task-specific erasure benchmarks, and therefore provide limited support for systematic comparison across models, conditioning channels, and deployment conditions. We present a concept-level probabilistic audit and reporting framework for diffusion models. We formalize governance-relevant concept behaviors as Bernoulli semantic events induced by stochastic generation, and define a Concept Risk Operator that maps model-channel configurations to structured risk profiles, enabling comparison across prompting interfaces, learned embedding channels, models, and recorded conditions. We apply sample-level post-hoc calibration and configuration-level risk aggregation, and show that probability error can change thresholded actions near policy boundaries. Experiments on SD1.5, SD2.1, and SDXL reveal consistent yet non-uniform operational risk patterns across concept families, channels, recorded conditions, and shifted protocols. In particular, embedding-based access and obfuscated prompts expose risks often understated by standard-prompt evaluation. A pooled multi-protocol calibrator improves held-out probability reliability, but we do not claim transfer from a standard-only calibrator. CLRC provides a common audit schema for probabilistic and decision-aware governance of multimedia generation systems.
\end{abstract}

\begin{CCSXML}
<ccs2012>
<concept>
<concept_id>10002978.10003029</concept_id>
<concept_desc>Security and privacy~Human and societal aspects of security and privacy</concept_desc>
<concept_significance>500</concept_significance>
</concept>
<concept>
<concept_id>10010147.10010178</concept_id>
<concept_desc>Computing methodologies~Artificial intelligence</concept_desc>
<concept_significance>300</concept_significance>
</concept>
<concept>
<concept_id>10010147.10010257</concept_id>
<concept_desc>Computing methodologies~Machine learning</concept_desc>
<concept_significance>300</concept_significance>
</concept>
<concept>
<concept_id>10010147.10010178.10010224</concept_id>
<concept_desc>Computing methodologies~Computer vision</concept_desc>
<concept_significance>100</concept_significance>
</concept>
</ccs2012>
\end{CCSXML}

\ccsdesc[500]{Security and privacy~Human and societal aspects of security and privacy}
\ccsdesc[300]{Computing methodologies~Artificial intelligence}
\ccsdesc[300]{Computing methodologies~Machine learning}
\ccsdesc[100]{Computing methodologies~Computer vision}

\keywords{Diffusion Models, Multimedia Generation, Concept-Level Risk, Probability Calibration, AI Safety, Privacy-Aware Governance}


\maketitle

\section{Introduction}
Diffusion models have rapidly evolved into powerful visual foundation models, achieving remarkable performance in high-fidelity image synthesis and semantic controllability~\cite{10419041, ma2025efficient}. Following the introduction of denoising diffusion probabilistic models~\cite{ho2020ddpm} and latent diffusion models~\cite{rombach2022ldm}, text-to-image (T2I) systems such as Stable Diffusion (SD) have demonstrated strong alignment between textual descriptions and generated images~\cite{10230895, 10489849}. Subsequent advances in personalization and concept manipulation, including Textual Inversion~\cite{gal2022textual}, DreamBooth~\cite{ruiz2023dreambooth}, and cross-attention-based editing~\cite{hertz2022prompttoprompt}, have transformed diffusion models from generic generative tools into programmable semantic systems capable of concept injection, modification, and removal~\cite{10646735, guo2025conceptguard, kumari2023concept}.

While these capabilities substantially enhance controllability, they simultaneously introduce new governance and reliability challenges~\cite{THEMIS2025ndss}. In concept-driven diffusion settings, semantic concepts act as operational units that can be activated, suppressed, transferred, or combined. Sensitive concepts such as weapons, nudity, copyrighted artistic styles, identity-related attributes, or other privacy-relevant semantic cues may be intentionally erased through model editing~\cite{gandikota2023erasing, kumari2023concept, lu2024mace}, yet remain recoverable through alternative conditioning channels or adaptive prompts~\cite{petsiuk2024concept, chen2025mind}. This issue is particularly important for identity-linked and personalized generation settings, where semantic control may interact with subject-specific embeddings and thereby expand the effective governance surface beyond prompt-only access. Moreover, robustness-oriented defenses~\cite{NEURIPS2024_40954ac1, kim2024race} often target specific attack surfaces without providing a holistic view of concept-level vulnerabilities across models, conditioning channels, and deployment configurations.

Current research primarily addresses complementary subsets of this problem. Concept erasure and unlearning methods aim to suppress targeted concepts~\cite{gandikota2023erasing, kumari2023concept}, while red-teaming approaches evaluate prompt-based safety weaknesses~\cite{tsai2024ringabell}. Improvements in diffusion guidance and sampling stability focus on generation quality or classifier alignment~\cite{nichol2021improved, dhariwal2021guided}. Benchmarks such as Six-CD cover important concept-removal settings~\cite{ren2025six}; what remains less standardized is joint reporting across model, channel, protocol distribution, intervention state, benign loss, and probability calibration.

This gap becomes particularly critical when diffusion models are deployed as foundation systems. In real-world usage, concept activation behavior depends not only on textual prompts but also on learned embeddings introduced through personalization techniques~\cite{gal2022textual, ruiz2023dreambooth, richardson2024conceptlab}. This channel expansion is governance-relevant not only for unsafe content, but also for identity-linked and privacy-relevant concepts, since subject-specific embeddings or alternate control pathways may preserve, recover, or amplify concept behaviors that are not visible under prompt-only evaluation~\cite{dubinski2025cdi}. The same semantic concept may exhibit distinct reproduction probabilities under different architectures such as SD1.5 and SDXL, as well as under different conditioning channels. Furthermore, existing evaluation metrics typically report binary outcomes or aggregate success rates, without quantifying whether predicted risk aligns with empirical frequency. Classical calibration theory~\cite{guo2017calibration} suggests that reliable decision-making requires probabilistic estimates whose confidence reflects true event likelihood, yet such calibration analysis has not been systematically applied to semantic concept events in diffusion models. These observations motivate three fundamental research questions:
\begin{itemize}[leftmargin=*]
\item \textbf{RQ1:} How does semantic risk migrate across prompt, embedding, and intervention pathways under a unified configuration space?
\item \textbf{RQ2:} Can concept activation be modeled as a Bernoulli semantic event, with reproduction and bypass as comparable risks and spillover as a population-level collateral-risk diagnostic?
\item \textbf{RQ3:} How do held-out probability errors and post-hoc calibration affect sample-level threshold actions near policy boundaries?
\end{itemize}

To address these questions, we present \emph{Concept-Level Risk and Calibration} (CLRC) as an audit and reporting schema rather than a new concept detector, erasure algorithm, or calibration method. CLRC indexes event-frequency estimators by concept, model, channel, protocol distribution, and intervention state, and couples these views with held-out post-hoc calibration using a declared multi-protocol development/held-out design. Its intended contribution is coverage, comparability, and decision-oriented reporting. We instantiate this schema in a controlled protocol for cross-model, cross-channel, and intervention-aware analysis. Empirically, the protocol reveals structured concept-level risk patterns, cross-channel risk shifts, and systematic miscalibration in concept-risk estimation. Figure~\ref{fig:overview} provides an overview of the CLRC pipeline, from concept-conditioned generation and event construction to structured risk estimation and post-hoc calibration for governance. Our contributions are summarized as follows:
\begin{itemize}[leftmargin=*]
\item \textit{Unified concept-risk evaluation across access and governance pathways.} Existing diffusion safety evaluations are mostly prompt-centric. We formalize an audit operator over model, channel, protocol distribution, and condition state, enabling direct comparison across prompt access, learned embedding access, and recorded pre/post-condition settings.

\item \textit{Bernoulli semantic-event modeling for concept-level governance.} Rather than reporting only binary success or suppression rates, we model reproduction and bypass as Bernoulli concept events and treat spillover as a population-level collateral-risk diagnostic. This provides a common estimation target for concept-level governance. 

\item \textit{Calibration-aware analysis of governance stability.} We show that miscalibration can change thresholded actions near policy boundaries. Our experiments quantify held-out probability reliability and action sensitivity to post-hoc calibration; they do not assert standard-to-shifted transfer or oracle decision accuracy.
\end{itemize}
\section{Related Work}\label{sec-rw}

\subsection{Concept Conditioning, Editing, and Erasure in Diffusion Models} \label{sec-rw-diffusion-control}

Diffusion models have become a dominant class of foundation generators for visual synthesis due to scalable training and high-fidelity sampling~\cite{fuest2026diffusion, liu2026alignment}. DDPM establishes the forward noising and reverse denoising framework~\cite{ho2020ddpm}, while subsequent advances improve objectives and sampling efficiency. Latent diffusion substantially reduces computation by performing generation in a learned latent space without sacrificing quality~\cite{rombach2022ldm}. Diffusion architectures have also evolved beyond U-Nets, with transformer-based backbones showing favorable scaling behavior~\cite{peebles2023dit}. Large-scale systems further benefit from stronger text encoders and guidance mechanisms: Imagen highlights the role of large language models as text encoders~\cite{saharia2022imagen}, and classifier-free guidance enables controllable fidelity--diversity trade-offs without a separate classifier~\cite{ho2022cfg}. More broadly, score-based formulations provide an explicit stochastic foundation for diffusion-style generation~\cite{song2021scorebased, ding2026ccdm}, while recent surveys review the growing landscape of controllable T2I diffusion systems~\cite{Controllable2025}.

As diffusion models mature into concept-driven foundation systems, practical utility increasingly depends on the ability to inject, bind, edit, suppress, and recover semantic concepts through diverse conditioning channels. Personalization methods bind new concepts or subjects to tokens or embeddings: Textual Inversion learns a new token embedding while freezing the base model~\cite{gal2022textual}, whereas DreamBooth fine-tunes the model to associate a rare token with a specific subject under prior-preservation regularization~\cite{ruiz2023dreambooth}. At inference time, concept manipulation can be achieved through attention and inversion mechanisms. Prompt-to-Prompt localizes word-level effects through cross-attention control~\cite{hertz2022prompttoprompt}, and Null-text inversion improves real-image reconstruction and editing by optimizing the unconditional embedding used in classifier-free guidance~\cite{mokady2023nulltext}. Instruction-based editing further extends prompt control to natural-language edit commands, as in InstructPix2Pix~\cite{brooks2023instructpix2pix}, while ControlNet adds spatial or structural conditions such as edges, depth, pose, or segmentation without retraining the backbone~\cite{zhang2023controlnet}.

A closely related line studies concept manipulation from editing to removal and forgetting~\cite{lin2025ice}. Early editing frameworks such as SDEdit steer the denoising process under conditional priors to balance realism and fidelity~\cite{meng2021sdedit}. Training-free or lightweight methods support semantic editing through prompt control, discovered embedding directions, or inferred masks, including Prompt-to-Prompt~\cite{hertz2022prompttoprompt}, Null-text inversion~\cite{mokady2023nulltext}, InstructPix2Pix~\cite{brooks2023instructpix2pix}, pix2pix-zero~\cite{parmar2023pix2pixzero}, and DiffEdit~\cite{couairon2022diffedit}. Beyond editing, concept erasing and unlearning aim to suppress specific concepts, styles, or unsafe content while preserving model utility~\cite{li2025set, chen2025trce, gong2024reliable}. Safe Latent Diffusion introduces inference-time interventions for suppressing unsafe generation without retraining~\cite{schramowski2023sld}; ESD performs weight-level concept removal through negative-guidance fine-tuning~\cite{gandikota2023erasing}; and UCE uses closed-form parameter editing for moderation, debiasing, and style erasure at scale~\cite{gandikota2024uce}. Machine unlearning surveys further systematize relevant definitions, threat models, and algorithm families~\cite{qi2025forget, NguyenunlearningSurvey, gao2025meta}, while Forget-Me-Not and SalUn study selective forgetting and saliency-based removal in diffusion and broader generative settings~\cite{zhang2023forgetmenot,fan2024salun, rusanovsky2025memories}. Recent analyses show that concept removal is difficult to validate and may remain brittle under adversarial prompts or alternate channels~\cite{liu2025multimodal}: Ring-A-Bell reveals model-agnostic bypass behavior~\cite{tsai2024ringabell}, and subsequent work examines whether concepts are truly erased and what collateral side effects remain~\cite{lu2025whenErased}. Together, these studies motivate the development of concept-level risk evaluation frameworks for systematically comparing editing, erasure, bypass, and spillover behaviors across models and conditioning channels.

Embedding-space vulnerability predates CLRC: query-free attacks against Stable Diffusion exploit sensitive text-encoder dimensions~\cite{zhuang2023pilot}. Recent defenses include HiRM, which redirects high-level representations~\cite{lee2026hirm}, and AEGIS, which uses adversarial targets without retention data~\cite{li2026aegis}. Such methods can instantiate a fully specified intervention $\pi$; CLRC is complementary rather than a competing optimizer, auditing reproduction, bypass, benign loss, and calibration under matched protocols.
\begin{figure*}[t]
\centering
\includegraphics[width=\textwidth]{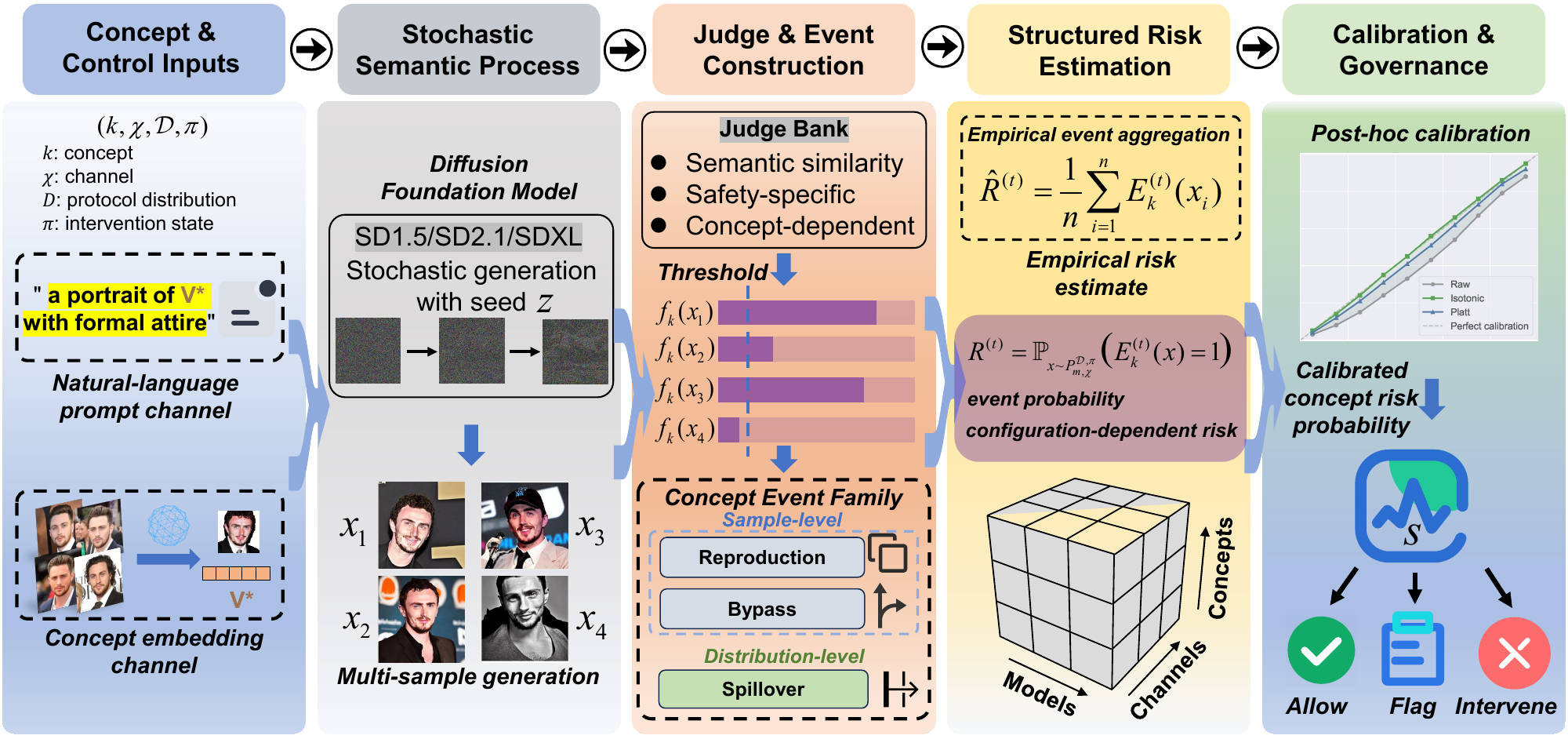}
\caption{Overview of Concept-Level Risk Modeling and Calibration in Diffusion Models.} 
\Description{A five-stage pipeline. Concept and control inputs feed a stochastic diffusion generator; a fixed judge constructs thresholded concept events; event frequencies are aggregated into structured risks over models, channels, and concepts; and post-hoc calibration supports allow, flag, or intervene actions.}
\label{fig:overview}
\end{figure*}

\subsection{Safety and Governance Challenges in Diffusion Models} \label{sec-rw-safety-governance}

As diffusion models are increasingly deployed, safety evaluation and governance have become central concerns. Imagen introduced DrawBench for human evaluation of compositional and prompt-following behavior~\cite{saharia2022imagen}; T2ISafety audits fairness, toxicity, and privacy~\cite{li2025t2isafety}, while Six-CD and UnlearnCanvas benchmark concept-removal effectiveness and retainability~\cite{ren2025six,zhang2024unlearncanvas}. These task-specific benchmarks provide valuable evaluation targets. CLRC contributes an orthogonal reporting layer: it indexes reproduction, bypass, and benign loss jointly by model, channel, protocol, and intervention, and adds held-out human-referenced calibration and action-sensitivity analysis. We therefore claim broader joint reporting, not superior detector or erasure performance. Inference-time methods such as Safe Latent Diffusion provide safety guidance and curated prompt testbeds without retraining~\cite{schramowski2023sld}. At the same time, empirical evidence suggests that safety filters and alignment layers can be brittle under distribution shifts or adversarial prompting~\cite{villa2025exposing, wu2025proactive}, indicating that evaluation should extend beyond nominal prompt settings to multiple channels and attack surfaces.

Beyond content moderation, diffusion models also raise privacy, memorization, fairness, and governance concerns. Web-scale training may induce memorization and reproduction of training samples; Carlini et al.\ show that diffusion models can leak verbatim training images under targeted extraction~\cite{carlini2023extracting}, while membership inference attacks indicate broader training-data exposure risks~\cite{shokri2017membership}. Open datasets such as LAION-5B further introduce noise and potential demographic or cultural bias into downstream diffusion systems~\cite{schuhmann2022laion5b}. Existing mitigation strategies often rely on curation, filtering, debiasing, or adversarial testing, but they typically focus on isolated failure modes and provide limited support for structured comparison across concept families, models, and interfaces. Calibration has long been recognized as critical for safety-critical decision making, since miscalibration can destabilize threshold-based interventions near policy boundaries~\cite{vaicenavicius2019evaluating}. However, most diffusion safety evaluations still rely on heuristic auditing or aggregate success rates, without modeling governance-relevant concept activation as calibrated Bernoulli events. This leaves a gap between empirical auditing and decision-oriented concept-level risk assessment, which our work aims to address.

\section{Concept Risk Modeling and Calibration}\label{sec-CRMC}

\subsection{Concept Event Family} \label{sec-event-family}
We model a diffusion foundation model as a conditional stochastic generator over an image space $\mathcal{X}$. Let $m\in\mathcal{M}$ index the model architecture and let $\chi\in\{\mathrm{prompt},\mathrm{embedding}\}$ denote the input channel. Let $z\sim p(z)$ be the latent noise seed and let $c_{\mathrm{cond}}\sim\mathcal{D}$ be a conditioning input drawn from a distribution $\mathcal{D}$ under channel $\chi$. The generation process is $x = G_m\!\left(z, c_{\mathrm{cond}}^{(\chi)}\right)$, which induces a probability measure over $\mathcal{X}$ given by
\begin{equation}
P_{m,\chi}^{\mathcal{D}} = \mathrm{Law}\!\left(G_m\!\left(z, c_{\mathrm{cond}}^{(\chi)}\right)\right), \qquad z\sim p(z),\; c_{\mathrm{cond}}\sim\mathcal{D}. \label{eq-induced-distribution} \end{equation}

Here, $\mathrm{Law}(\cdot)$ denotes the distribution of its random argument, induced jointly by the sampled seed and conditioning input. Thus each $(m,\chi,\mathcal D)$ defines an image distribution. A governance condition $\pi$ induces $P_{m,\chi}^{\mathcal D,\pi}=\Pi_\pi(P_{m,\chi}^{\mathcal D})$, where $\Pi_\pi$ replaces the baseline pipeline rather than estimating risk.

We formalize governance-relevant behavior as measurable concept events. Let $\mathcal K=\{1,\dots,K\}$. For each $k\in\mathcal K$, let $Y_k(x)\in\{0,1\}$ denote the human-reference event, observed here only through archived binary annotations on the audited calibration set. A fixed judge gives $s_k(x)=f_k(x)$ and the operational event $E_k(x)=\mathbb I\{s_k(x)\ge\tau_k\}$. The risk tensor uses $E_k$, whereas calibration maps a fixed score proxy to $Y_k$; $Y_k$ is a reference target, not latent truth or an oracle.

The sample-level type $t\in\mathcal T=\{\mathrm{rep},\mathrm{byp}\}$ restricts admissible protocol--condition pairs without changing the judge. Let $\mathcal C_{\mathrm{rep}}$ contain the declared standard, shifted, and obfuscated pre-condition reproduction slices and the matched standard post-condition residual slice; let $\mathcal C_{\mathrm{byp}}$ contain the standard and obfuscated post-condition attack slices. For $(\mathcal D,\pi)\in\mathcal C_t$, set $E_k^{(t)}=E_k$ and $Y_k^{(t)}=Y_k$ under $P_{m,\chi}^{\mathcal D,\pi}$. Thus the types differ by admissible slices, not decision rules; operational bypass does not imply configuration-invariant judge error.

Interventions may also affect benign concepts. For $b\in\mathcal K_{\mathrm{ben}}$, let $R^{(\mathrm{rep})}(b,\cdot)$ denote the analogous fixed-judge event frequency; define the one-sided benign-utility loss and its tolerance indicator as
\begin{equation}
\resizebox{\linewidth}{!}{%
$\begin{aligned}
D_b(m,\chi;\mathcal{D},\pi)
&=\Big[R^{(\mathrm{rep})}(b,m,\chi;\mathcal{D},\varnothing)
-R^{(\mathrm{rep})}(b,m,\chi;\mathcal{D},\pi)\Big]_+,\\
\mathsf{S}_b(\epsilon)&=\mathbb{I}\{D_b\ge\epsilon\},
\qquad b\in\mathcal{K}_{\mathrm{ben}} .
\end{aligned}$%
}
\label{eq-spillover}
\end{equation}
Here $[u]_+=\max(u,0)$. We report the continuous benign-loss magnitude $\overline D(m,\chi;\mathcal D,\pi)=|\mathcal{K}_{\mathrm{ben}}|^{-1}\sum_bD_b(m,\chi;\mathcal D,\pi)$ and, when thresholding is required, the rate $|\mathcal{K}_{\mathrm{ben}}|^{-1}\sum_b\mathsf{S}_b(\epsilon)$. In our protocol $\mathcal{K}_{\mathrm{ben}}$ is a disjoint $20$-concept auxiliary benign set and $\epsilon=0.05$; Fig.~\ref{fig:comparative_multi}(d) reports one $\overline D$ for each displayed model--channel--protocol--condition configuration, not a target-family-specific statistic. Both quantities are population-level diagnostics and are excluded from $\mathcal{T}$.

\subsection{Structured Risk Profiling} \label{sec-structured-risk}

The Concept Risk Operator induces a structured risk profile, which we also view as a risk tensor indexed by concept, event type, model, channel, protocol distribution, and intervention. We formalize \emph{structured risk profiling} as a distribution-level characterization of governance-relevant concept behavior across these axes. For $t\in\mathcal T$, $k\in\mathcal K$, and an admissible pair $(\mathcal D,\pi)\in\mathcal C_t$, we define the event risk under model--channel configuration $(m,\chi)$ as the population-level probability
\begin{equation}
\begin{aligned}
R^{(t)}(k,m,\chi;\mathcal{D},\pi) &\triangleq \mathbb{P}_{x\sim P_{m,\chi}^{\mathcal{D},\pi}} \big(E_k^{(t)}(x)=1\big) = \mathbb{E}_{x\sim P_{m,\chi}^{\mathcal{D},\pi}} \big[E_k^{(t)}(x)\big], \end{aligned}
\label{eq-risk-event-type}
\end{equation}
where $\pi=\varnothing$ denotes the nominal setting. Only declared pairs $(\mathcal D,\pi)\in\mathcal C_t$ are admissible for event type $t$. This definition covers reproduction and bypass risks; spillover is handled separately as a distribution-level diagnostic in Eq.~\eqref{eq-spillover}.

Collecting risks across concepts yields the \emph{event-specific risk vector}
\begin{equation}
\resizebox{\linewidth}{!}{%
$\begin{aligned}
\mathbf{R}^{(t)}(m,\chi;\mathcal{D},\pi)
&= \big(
R^{(t)}(1,m,\chi;\mathcal{D},\pi), \dots, 
R^{(t)}(K,m,\chi;\mathcal{D},\pi)
\big)
\in [0,1]^K .
\end{aligned}$%
}
\label{eq-risk-vector-event}
\end{equation}
Collecting only admissible entries defines the structured risk profile
\begin{equation}
\mathbf{R}(m,\chi)=\left\{\mathbf{R}^{(t)}(m,\chi;\mathcal D,\pi):
t\in\mathcal T,\;(\mathcal D,\pi)\in\mathcal C_t\right\},
\label{eq-structured-risk}
\end{equation}
indexed by concept, event type, model, channel, protocol distribution, and condition.

Fix $(k,m,\mathcal{D},\pi,t)$. For a change from channel $\chi_1$ to $\chi_2$, define the signed channel risk gap
\begin{equation}
\begin{aligned}
\Delta^{(t)}_k(m;\chi_1\!\rightarrow\!\chi_2)
&=
R^{(t)}(k,m,\chi_2;\mathcal{D},\pi)
- R^{(t)}(k,m,\chi_1;\mathcal{D},\pi),
\end{aligned}
\label{eq-channel-gap-structured}
\end{equation}
and for a change from model $m_1$ to $m_2$ under fixed $(k,\chi,\mathcal{D},\pi,t)$, the model risk gap is
\begin{equation}
\begin{aligned}
\Gamma^{(t)}_k(\chi;m_1\!\rightarrow\!m_2)
&=
R^{(t)}(k,m_2,\chi;\mathcal{D},\pi)
- R^{(t)}(k,m_1,\chi;\mathcal{D},\pi).
\end{aligned}
\label{eq-model-gap-structured}
\end{equation}
Thus a positive prompt-to-embedding gap means higher fixed-judge frequency under embedding access, while a negative SD1.5-to-SDXL gap means a decrease on SDXL. Under configuration-invariant error with $\alpha_k+\beta_k<1$, these gaps preserve human-reference ordering; configuration-dependent error need not.

Under independent sampling, risks are estimated from $\{x_i\}_{i=1}^n\sim P_{m,\chi}^{\mathcal{D},\pi}$ by
\begin{equation}
\hat{R}^{(t)}(k,m,\chi;\mathcal{D},\pi) = \frac{1}{n} \sum_{i=1}^{n} E_k^{(t)}(x_i)
\label{eq-empirical-risk-structured}
\end{equation}
and is unbiased and consistent for $R^{(t)}(k,m,\chi;\mathcal{D},\pi)$. Our matched fixed-seed audit estimates the corresponding finite-protocol frequency and uses the same seed set across paired configurations; uncertainty is therefore interpreted conditional on that audit design.

\subsection{Probabilistic Estimation and Calibration} \label{sec-prob-calibration}

Structured risk profiling defines population-level event frequencies under model-induced distributions. Calibration is performed at the sample level and then aggregated, rather than by training a separate neural risk network. Let $j$ index a score--label pair, with concept $k_j$, channel $\chi_j$, human annotation $y_j=Y_{k_j}(x_j)$, and cosine score $s_j=f_{k_j}(x_j)$. We use the fixed raw probability proxy $r_j=\operatorname{clip}_{[0,1]}(s_j)$ and fit a channel-level post-hoc map
\begin{equation}
\begin{gathered}
q_j
= h_{\phi,\chi_j}(r_j)
\approx
\mathbb{P}\!\left(Y_{k_j}=1 \mid r_j,\chi_j\right),
\\
\widehat R_{\mathrm{cal}}(k,c)
=
\frac{1}{|\mathcal I_{k,c}|}
\sum_{j\in\mathcal I_{k,c}} q_j .
\end{gathered}
\label{eq-calibrated-risk}
\end{equation}
Here $c=(t,m,\chi,\mathcal D,\pi)$ and $\mathcal I_{k,c}=\{j:k_j=k,\ c_j=c\}$ is the concept-specific held-out subset. The two maps $h_{\phi,\chi}$ pool development pairs across concepts and protocol slices, but aggregation retains the concept index. This human-reference probability aggregate differs from the operational judge-event frequency in Eq.~\eqref{eq-empirical-risk-structured}. For probability-reliability evaluation, we additionally define $\mathcal I_c^{\mathrm{pool}}=\bigcup_k\mathcal I_{k,c}$ within a named held-out slice; the reported ECE/Brier values are pooled sample-level diagnostics, not concept-specific policy risks.

Threshold fitting and probability calibration use the development annotations for different mappings: $\tau_k$ converts $s$ into the operational event $E$, whereas $h_{\phi,\chi}$ maps the fixed proxy $r$ to the human-reference target $Y$. Neither $E$ nor a calibrated output is reused as its own calibration target. A principled sample-level measure of probabilistic accuracy is the Brier score,
\begin{equation}
\mathrm{BS} = \frac{1}{N_{\mathrm{cal}}}\sum_{j=1}^{N_{\mathrm{cal}}}
\big(q_j-y_j\big)^2,
\label{eq-brier}
\end{equation}
computed only on held-out human-labeled score--label pairs.

Calibration evaluates whether predicted probabilities agree with empirical human-label frequencies. Partitioning $[0,1]$ into bins $\{\mathcal{C}_b\}_{b=1}^B$ and letting $\mathcal{I}_b=\{j:q_j\in\mathcal{C}_b\}$, with $N_{\mathrm{cal}}=\sum_b|\mathcal{I}_b|$, the Expected Calibration Error is
\begin{equation} 
\mathrm{ECE} = \sum_{b=1}^{B} \frac{|\mathcal{I}_b|}{N_{\mathrm{cal}}}
\left| \frac{1}{|\mathcal{I}_b|} \sum_{j\in\mathcal{I}_b} y_j
- \frac{1}{|\mathcal{I}_b|} \sum_{j\in\mathcal{I}_b}q_j \right|.
\label{eq-ece-structured}
\end{equation}
Empty bins are omitted. Raw Brier/ECE use the same equations with $q_j$ replaced by $r_j$.

For a concept--configuration policy threshold $a\in(0,1)$, let $\widehat R_{\mathrm{H}}(k,c)=|\mathcal I_{k,c}|^{-1}\sum_{j\in\mathcal I_{k,c}}y_j$ be the finite held-out human-reference frequency. The calibrated and human-reference configuration actions disagree when
\begin{equation}
\mathbb I\!\left\{\widehat R_{\mathrm{cal}}(k,c)\ge a\right\}
\ne
\mathbb I\!\left\{\widehat R_{\mathrm H}(k,c)\ge a\right\}.
\label{eq-decision-discrepancy}
\end{equation}
Separately, the raw and calibrated sample actions are $\mathbb{I}\{r_j\ge a\}$ and $\mathbb{I}\{q_j\ge a\}$. Our experiments report their disagreement as sample-level action sensitivity; without an independent action reference, it is not decision accuracy.

\section{Probabilistic Semantic Risk Theory}\label{sec-psrt}

\subsection{Event Tensor and Intervention-Induced Risk} \label{sec-psrt-tensor}

The structured risk profiling layer defines event-specific risks as population probabilities under induced image distributions. We now elevate this construction into a tensorized probabilistic object that serves as the foundation of semantic risk theory.

Interventions act as transformations on the underlying generation distribution. Let $P$ denote a baseline image distribution and define the intervention operator $\Pi_{\pi} : P \mapsto P^{\pi}$, where $P^{\pi}$ denotes the post-intervention distribution induced by governance mechanism $\pi$. Event probabilities are therefore functionals of transformed distributions $R^{(t)}(k;P^{\pi}) = \mathbb{P}_{x\sim P^{\pi}} \big(E_k^{(t)}(x)=1\big)$. This formulation isolates semantic risk as a distributional property rather than as a prompt-specific artifact.

The tensorized representation enables stability analysis at the event level. Let $P$ and $Q$ be two distributions over $\mathcal{X}$. For any $(k,t)$, 
\begin{equation}
\left| R^{(t)}(k;P) - R^{(t)}(k;Q) \right| \le \mathrm{TV}(P,Q),
\label{eq-tensor-stability}
\end{equation}
where $\mathrm{TV}(P,Q)=\sup_A|P(A)-Q(A)|$ is total variation distance; it upper-bounds the change of any single concept-event probability between $P$ and $Q$. While Eq.~\eqref{eq-tensor-stability} provides a worst-case stability guarantee, it can be overly loose for concept events induced by thresholded judges. Supplementary Appendix A.2, \emph{Structured Stability and Judge-Noise Robustness}, gives a proof of a bound that separates distributional shift from semantic boundary mass.

For the refined bound proved in Supplementary Appendix A.2, we use the ramp $\psi_{\tau,\rho}$ that is $0$ below $\tau-\rho$, $1$ above $\tau+\rho$, and linear between them. It has Lipschitz constant $1/(2\rho)$ and differs from $\mathbb{I}\{u\ge\tau\}$ only within $\rho$ of the threshold.

\subsection{Comparative Risk Operators and Conditional Robustness} \label{sec-psrt-comparative}

Structured risk tensors enable comparison across models, channels, and interventions. However, absolute values may be confounded by judge bias, generation quality, or concept activation frequency. We therefore study comparative operators and when their ordering is preserved despite judge error. For fixed $(k,t,\mathcal D,\pi)$, the channel- and model-level operators use the signed directions in Eqs.~\eqref{eq-channel-gap-structured}--\eqref{eq-model-gap-structured}. To state their limits, we adopt class-conditional label noise for the fixed judge.
\begin{table}[t]
\centering
\caption{Category-level mean operational reproduction frequency $\hat{R}^{(\mathrm{rep})}$ under the non-intervention setting, measured by the fixed thresholded judge.}
\label{tab:structured_risk_summary}
\resizebox{\linewidth}{!}{
\begin{tabular}{lcccccc}
\toprule
\multirow{2}{*}{\textbf{Concept family}} 
& \multicolumn{2}{c}{\textbf{SD1.5}} 
& \multicolumn{2}{c}{\textbf{SD2.1}} 
& \multicolumn{2}{c}{\textbf{SDXL}} \\
\cmidrule(lr){2-3} \cmidrule(lr){4-5} \cmidrule(lr){6-7}
& \textbf{Prompt} & \textbf{Embedding} & \textbf{Prompt} & \textbf{Embedding} & \textbf{Prompt} & \textbf{Embedding} \\
\midrule
Identity-related        & 0.31 & 0.57 & 0.26 & 0.50 & 0.19 & 0.43 \\
Copyright-sensitive     & 0.27 & 0.52 & 0.22 & 0.45 & 0.16 & 0.39 \\
Unsafe / NSFW-sensitive & 0.38 & 0.64 & 0.31 & 0.58 & 0.24 & 0.49 \\
Benign controls         & 0.07 & 0.10 & 0.06 & 0.09 & 0.05 & 0.08 \\
\midrule
\textbf{Mean over families} & \textbf{0.26} & \textbf{0.46} & \textbf{0.21} & \textbf{0.41} & \textbf{0.16} & \textbf{0.35} \\
\bottomrule
\end{tabular}
}
\end{table}

Let $Y_k^{(t)}(x)$ denote the observed human-reference event defined in Sec.~\ref{sec-event-family}, and let $E_k^{(t)}(x)$ denote the separately thresholded event produced by the judge. For the following conditional comparison relative to that reference, assume
\begin{equation}
\left\{
\begin{aligned}
\mathbb{P}\big(E_k^{(t)}=1 \mid Y_k^{(t)}=1\big) &= 1-\beta_k, \\ 
\mathbb{P}\big(E_k^{(t)}=1 \mid Y_k^{(t)}=0\big) &= \alpha_k, 
\end{aligned}
\right.
\label{eq-detector-bias}
\end{equation}
where $\alpha_k$ and $\beta_k$ are the judge's reference-conditional false-positive and false-negative rates. If they are configuration-independent and satisfy $\alpha_k+\beta_k<1$, observed and human-reference gaps have the same sign. These are modeling conditions, not empirical guarantees.

Let $R_{\mathrm H}^{(t)}=\mathbb{E}[Y_k^{(t)}]$ and note that the operational risk in Eq.~\eqref{eq-risk-event-type} is $R^{(t)}=\mathbb{E}[E_k^{(t)}]$. Under Eq.~\eqref{eq-detector-bias}, $R^{(t)}=(1-\alpha_k-\beta_k)R_{\mathrm H}^{(t)}+\alpha_k$. If error rates vary with configuration $c$, then $R_c=R_{\mathrm H,c}+b_c$, where $b_c=\alpha_c(1-R_{\mathrm H,c})-\beta_cR_{\mathrm H,c}$. Hence the observed gap $R_{c_2}-R_{c_1}$ differs from its human-reference counterpart by $b_{c_2}-b_{c_1}$; a sufficient sign-preservation condition is $|R_{\mathrm H,c_2}-R_{\mathrm H,c_1}|>|b_{c_2}-b_{c_1}|$. We therefore treat comparative invariance as conditional and do not infer judge stability from the reported sampling intervals.

The shifted and obfuscated protocols stress the generator and intervention pipeline, not the judge itself. A judge-aware attacker could seek a human-reference-positive output, $Y_k=1$, while keeping $E_k=0$, making $b_c$ configuration-dependent and potentially invalidating the observed ordering; neither post-hoc calibration nor sampling intervals certify robustness to such adaptive judge evasion. We further connect comparative operators to distributional divergence. Let $P_{m_1}$ and $P_{m_2}$ denote induced distributions under identical $(\chi,\mathcal{D},\pi)$. Then for any $(k,t)$,
\begin{equation}
\left| \Gamma^{(t)}_k(\chi;m_1\!\rightarrow\!m_2) \right| \le \mathrm{TV}(P_{m_1},P_{m_2}), 
\label{eq-comparative-bound}
\end{equation}
which follows directly from Eq.~\eqref{eq-tensor-stability}. Thus, comparative semantic risk is bounded by total variation between the induced distributions.

\section{Experiments} \label{sec-EXP}

\subsection{Experimental Setting} \label{sec-exp-setup}
\textit{Models, Channels, and Recorded Conditions.} We evaluate SD v1.5, SD v2.1, and SDXL through prompt and learned textual-inversion channels. The nominal setting $\pi=\varnothing$ and one fixed, opaque, indivisible end-to-end post-condition $\pi_{\mathrm{post}}$ are treated as archive-stable condition keys. Table~\ref{tab:comparative_gap_intervention} and Fig.~\ref{fig:comparative_multi} compare their output laws; the retained aggregate does not expose a component specification for $\pi_{\mathrm{post}}$, so no effect is attributed to a particular checkpoint, filter, or optimizer.
\begin{figure}[t]
\centering
\includegraphics[width=\linewidth]{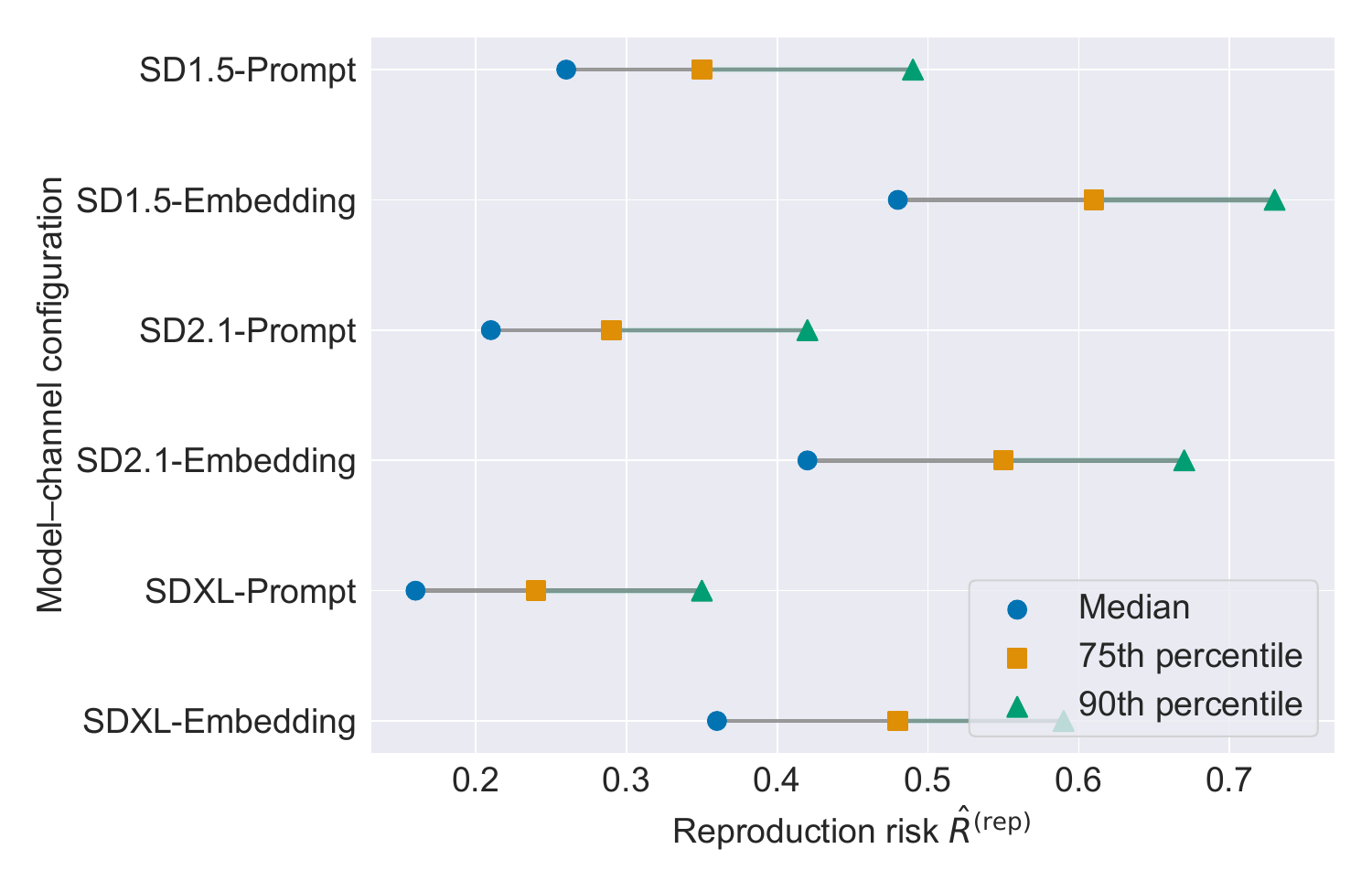}
\caption{Tail behavior of the per-concept reproduction-risk distribution under different model--channel configurations. The reference plot reports the median, $75$th percentile, and $90$th percentile of $\hat{R}^{(\mathrm{rep})}(k,m,\chi;\mathcal{D},\varnothing)$ across concepts. The persistent separation between the median and the upper tail indicates that a subset of concepts remains substantially more vulnerable than the category average would suggest.}
\Description{Six horizontal percentile ranges, one for each SD1.5, SD2.1, and SDXL prompt or embedding configuration. Each range marks the median, 75th percentile, and 90th percentile over 48 target concepts; embedding configurations have higher values than prompt configurations.}
\label{fig:structured_risk_tail}
\end{figure}

\textit{Concept Set and Protocol Distributions.} The core manifest registers $64$ concepts: $16$ identity-related, $16$ copyright-sensitive, $16$ unsafe/NSFW-sensitive, and $16$ core benign controls. A disjoint $20$-concept auxiliary benign set is used only for spillover. Consequently, target-family, core-benign, and spillover denominators are respectively $48\times50=2{,}400$, $16\times50=800$, and $20\times50=1{,}000$ outputs per model--channel--protocol--condition slice. Standard, shifted, and obfuscated prompt protocols each contain five templates per concept. For each backbone, the fixed token denotes its registered encoder-compatible embedding path and outer form, not transfer of one vector across incompatible encoders. Each evaluation cell contains $N=50$ images in total, not $50$ per template. Because outer forms differ, the channel gap compares declared deployment distributions rather than a causal channel substitution.

\textit{Judge and Calibration Split.} The fixed \texttt{openai/clip-vit-large-\allowbreak patch14} cosine judge scores a $3{,}840$-label core pool ($30$ prompt and $30$ embedding images per concept), split concept--channel-stratified into $2{,}688$ development and $1{,}152$ untouched held-out pairs. Development data set F1-optimal $\tau_k$ and fit one isotonic and one Platt map per channel across concepts and protocols; the maps remain frozen on six held-out slices ($n=192$ each). ECE uses ten equal-width bins (empty bins omitted), and action sensitivity scans $a=0.05{:}0.05{:}0.95$. Because retained labels lack annotator-level provenance and agreement, $Y$ is a human-reference target, not oracle truth.

\begin{table*}[t]
\centering
\caption{Comparative fixed-judge reproduction, signed-change, residual, and bypass statistics under the named baseline and post-condition slices.}
\label{tab:comparative_gap_intervention}
\resizebox{0.84\textwidth}{!}{
\begin{tabular}{lccccccccc}
\toprule
\multirow{2}{*}{\textbf{Concept family}} 
& \multicolumn{2}{c}{\textbf{Baseline SD1.5 $\hat{R}^{(\mathrm{rep})}$}} 
& \multirow{2}{*}{\textbf{$\Delta_{\mathrm{P}\rightarrow\mathrm{E}}$}} 
& \multicolumn{2}{c}{\textbf{SDXL$-$SD1.5 signed change}} 
& \multicolumn{2}{c}{\textbf{SD1.5 residual $\hat{R}^{(\mathrm{rep})}$}} 
& \multicolumn{2}{c}{\textbf{SD1.5-P bypass $\hat{R}^{(\mathrm{byp})}$}} \\
\cmidrule(lr){2-3} \cmidrule(lr){5-6} \cmidrule(lr){7-8} \cmidrule(lr){9-10}
& P & E &  & P & E & P & E & Std. & Obf. \\
\midrule
Identity-related        
& 0.31 & 0.57 & +0.26 & -0.12 & -0.14 
& 0.14 & 0.29 & 0.11 & 0.18 \\

Copyright-sensitive     
& 0.27 & 0.52 & +0.25 & -0.11 & -0.13 
& 0.13 & 0.26 & 0.10 & 0.17 \\

Unsafe / NSFW-sensitive 
& 0.38 & 0.64 & +0.26 & -0.14 & -0.15 
& 0.16 & 0.33 & 0.13 & 0.21 \\

\midrule
\textbf{Mean target-family value}
& \textbf{0.32} & \textbf{0.58} & \textbf{+0.26} & \textbf{-0.12} & \textbf{-0.14}
& \textbf{0.14} & \textbf{0.29} & \textbf{0.11} & \textbf{0.19} \\
\bottomrule
\end{tabular}
}
\end{table*}

\begin{figure*}[t]
\centering
\includegraphics[width=\linewidth]{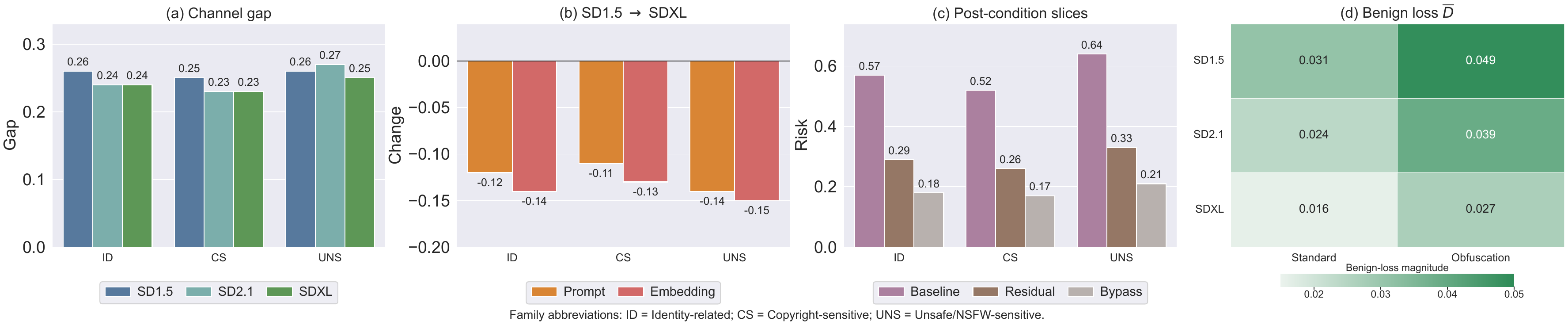}
\caption{Comparative fixed-judge operational summaries under the specified slices. (a) Standard-protocol prompt-to-embedding gaps by backbone. (b) Signed standard-protocol changes $\hat R_{\mathrm{SDXL}}-\hat R_{\mathrm{SD1.5}}$ at fixed channel. (c) SD1.5 embedding baseline, standard residual reproduction, and prompt-channel obfuscated bypass. (d) Prompt-channel benign-loss magnitude $\overline D$ by backbone and protocol, averaged over the same $20$-concept auxiliary benign set; no target-family conditioning is implied.}
\Description{Four panels compare fixed-judge statistics: prompt-to-embedding channel gaps across three concept families and three backbones; SDXL-minus-SD1.5 changes for prompt and embedding channels; SD1.5 baseline, residual, and bypass risks; and a heat map of benign loss under standard and obfuscated protocols.}
\label{fig:comparative_multi}
\end{figure*}

\subsection{Structured Concept Risk Across Models and Conditioning Channels}\label{sec-exp-risk}

Under the nominal setting $\pi=\varnothing$, Table~\ref{tab:structured_risk_summary} addresses \textbf{RQ1}: fixed-judge reproduction frequency is higher for the target families than for benign controls, higher under embedding than prompt conditioning, and lower on SDXL than SD1.5 without vanishing. Thus the observed risk is configuration-dependent. On SD1.5, prompt/embedding frequencies are 0.31/0.57 for identity-related, 0.27/0.52 for copyright-sensitive, and 0.38/0.64 for unsafe/NSFW-sensitive concepts, compared with 0.07/0.10 for core benign controls.

Across backbones, the largest prompt-to-embedding gaps occur for unsafe and identity-related concepts; SDXL remains above the benign baseline, especially under embedding control. This checkpoint comparison is descriptive, not a causal architectural effect.

Figure~\ref{fig:structured_risk_tail} shows a substantial upper tail in every model--channel configuration: the $90$th percentile is separated from the median, most strongly under embedding conditioning. Category means therefore conceal particularly vulnerable concepts.

\subsection{Comparative Risk Gaps, Residual Reproduction, Bypass, and Semantic Spillover} \label{sec-exp-gap}
For \textbf{RQ1--RQ2}, Table~\ref{tab:comparative_gap_intervention} and Fig.~\ref{fig:comparative_multi} show $\widehat\Delta_{\mathrm{P}\to\mathrm{E}}>0$ in every target family and negative SDXL$-$SD1.5 changes at each fixed channel, with slightly larger absolute embedding changes. Under the archived $\pi_{\mathrm{post}}$, non-zero residual and bypass frequencies remain: residual uses SD1.5 and the standard protocol in the indicated channel, whereas bypass uses SD1.5 prompt access under standard or obfuscated attack protocols. These describe an opaque archived condition, not component effects or a named intervention method.
\begin{figure*}[t]
\centering
\includegraphics[width=\textwidth]{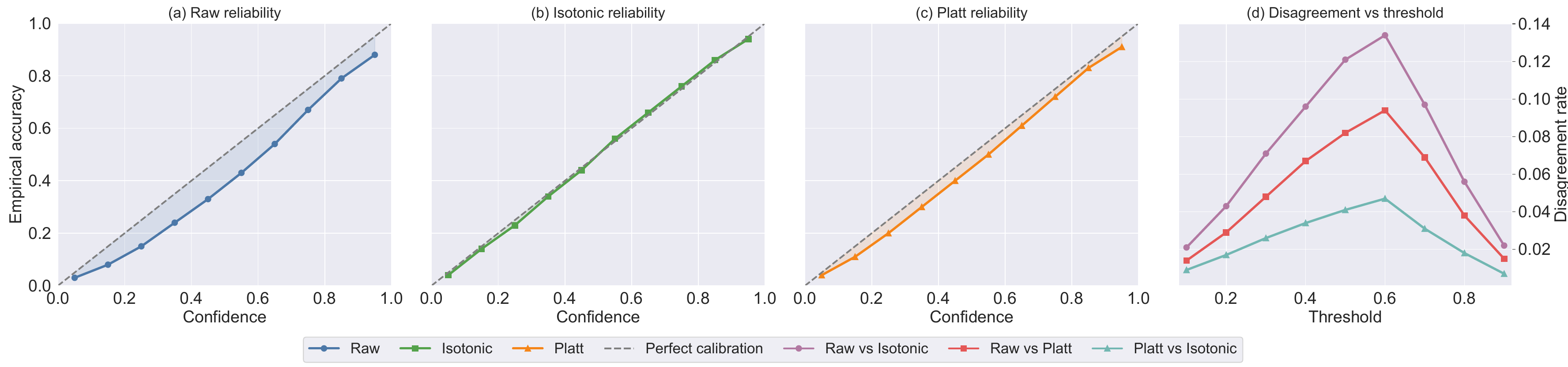}
\caption{Held-out probability reliability and calibration-induced action sensitivity. (a)--(c) pool held-out score--label pairs for the raw proxy, isotonic map, and Platt map. (d) shows pairwise action-disagreement rates among the three mappings over policy thresholds, not decision accuracy.}
\Description{Four panels. The first three are reliability curves for the raw proxy, isotonic mapping, and Platt mapping against the perfect-calibration diagonal. The fourth plots pairwise action-disagreement rates over policy thresholds for raw versus isotonic, raw versus Platt, and Platt versus isotonic mappings.}
\label{fig:calibration_multi}
\end{figure*}

Under the archived $\pi_{\mathrm{post}}$, residual reproduction spans $0.13$--$0.16$ for prompts and $0.26$--$0.33$ for embeddings; prompt-channel bypass spans $0.10$--$0.13$ under standard attack and $0.17$--$0.21$ under obfuscation. These ranges reinforce joint residual-and-bypass reporting without identifying component effects. Benign loss $\overline D$ is non-zero, largest for older backbones, and higher under obfuscated than standard prompts in every displayed backbone (Fig.~\ref{fig:comparative_multi}(d)). For SD1.5, SD2.1, and SDXL, standard/obfuscated $\overline D$ is $0.031/0.049$, $0.024/0.039$, and $0.016/0.027$, respectively; each value averages the same disjoint $20$-concept auxiliary benign set and is not a target-family statistic. Figure~\ref{fig:qualitative_intervention} is only an illustrative montage; all quantitative claims use the fixed-judge statistics.

Joint reporting exposes complementary slices: mean risk over target families is $0.32$ for standard prompts versus $0.58$ for embeddings; bypass is $0.11$ under standard versus $0.19$ under obfuscated prompts; and one slice has maximum raw--isotonic action sensitivity $0.169$. This compares coverage, not accuracy or a new safety algorithm.
\begin{figure}[t]
\centering
\includegraphics[width=\linewidth]{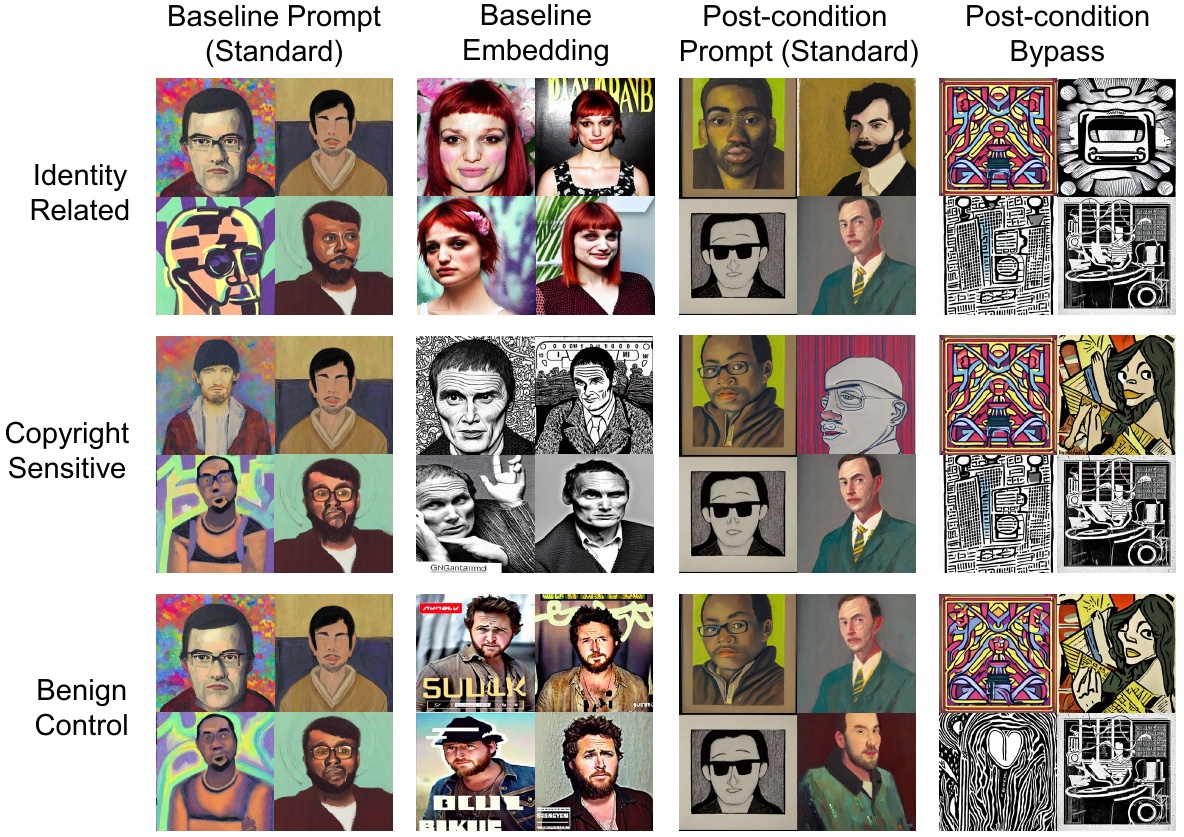}
\caption{Output montage under the four printed access/condition labels; quantitative conclusions use the reported fixed-judge statistics.}
\Description{A three-by-four montage. Rows show identity-related, copyright-sensitive, and benign-control concepts. Columns show baseline prompt, baseline embedding, post-condition prompt, and post-condition bypass settings, with four generated examples per cell.}
\label{fig:qualitative_intervention}
\end{figure}

\begin{table}[t]
\centering
\caption{Held-out sample-level calibration quality and raw-versus-calibrated action sensitivity across representative reproduction slices under $\pi=\varnothing$ ($n=192$ each).}
\label{tab:calibration_governance_summary}
\scriptsize
\setlength{\tabcolsep}{2.6pt}
\renewcommand{\arraystretch}{0.92}
\resizebox{\columnwidth}{!}{
\begin{tabular}{llccccccc}
\toprule
\textbf{Config.} & \textbf{Mthd.} & \textbf{ECE} & \textbf{Br.} & \textbf{$\Delta$ECE} & \textbf{$\Delta$Br.} & \textbf{Avg.AS} & \textbf{Max.AS} & \textbf{Thr. Max} \\
\midrule
\multirow{3}{*}{SD1.5-P-Std}
& Raw & 0.114 & 0.176 & \textemdash & \textemdash & \textemdash & \textemdash & \textemdash \\
& Iso & 0.051 & 0.149 & -0.063 & -0.027 & 0.068 & 0.123 & 0.50 \\
& Plt & 0.072 & 0.158 & -0.042 & -0.018 & 0.047 & 0.091 & 0.50 \\
\midrule
\multirow{3}{*}{SD1.5-E-Std}
& Raw & 0.143 & 0.191 & \textemdash & \textemdash & \textemdash & \textemdash & \textemdash \\
& Iso & 0.064 & 0.161 & -0.079 & -0.030 & 0.083 & 0.147 & 0.55 \\
& Plt & 0.089 & 0.170 & -0.054 & -0.021 & 0.058 & 0.105 & 0.55 \\
\midrule
\multirow{3}{*}{SD2.1-P-Shf}
& Raw & 0.129 & 0.183 & \textemdash & \textemdash & \textemdash & \textemdash & \textemdash \\
& Iso & 0.061 & 0.156 & -0.068 & -0.027 & 0.071 & 0.131 & 0.50 \\
& Plt & 0.081 & 0.164 & -0.048 & -0.019 & 0.050 & 0.096 & 0.50 \\
\midrule
\multirow{3}{*}{SD2.1-E-Obf}
& Raw & 0.176 & 0.214 & \textemdash & \textemdash & \textemdash & \textemdash & \textemdash \\
& Iso & 0.081 & 0.181 & -0.095 & -0.033 & 0.096 & 0.169 & 0.60 \\
& Plt & 0.107 & 0.190 & -0.069 & -0.024 & 0.071 & 0.126 & 0.60 \\
\midrule
\multirow{3}{*}{SDXL-P-Std}
& Raw & 0.091 & 0.151 & \textemdash & \textemdash & \textemdash & \textemdash & \textemdash \\
& Iso & 0.039 & 0.132 & -0.052 & -0.019 & 0.052 & 0.097 & 0.45 \\
& Plt & 0.057 & 0.139 & -0.034 & -0.012 & 0.036 & 0.071 & 0.45 \\
\midrule
\multirow{3}{*}{SDXL-E-Obf}
& Raw & 0.161 & 0.205 & \textemdash & \textemdash & \textemdash & \textemdash & \textemdash \\
& Iso & 0.074 & 0.173 & -0.087 & -0.032 & 0.084 & 0.151 & 0.55 \\
& Plt & 0.099 & 0.182 & -0.062 & -0.023 & 0.063 & 0.117 & 0.55 \\
\bottomrule
\end{tabular}
}
\caption*{\scriptsize Config.: P = Prompt, E = Embedding, Std = Standard, Shf = Shift, Obf = Obfuscation. Mthd.: Iso = Isotonic, Plt = Platt. Br. = Brier. AS = raw-calibrated action-disagreement rate; Avg./Max. = mean/maximum over $a=0.05{:}0.05{:}0.95$; Thr. Max = maximizing threshold.}
\end{table}

\subsection{Held-out Calibration and Action Sensitivity} \label{sec-exp-calib}
For \textbf{RQ3}, Fig.~\ref{fig:calibration_multi} and Table~\ref{tab:calibration_governance_summary} show raw-proxy error, improved held-out reliability after post-hoc mapping, and action differences peaking near intermediate thresholds. $\Delta$ECE and $\Delta$Brier are method-minus-raw; Avg.AS and Max.AS are action-change rates, not lower-is-better metrics.

Across six slices, $r=\operatorname{clip}_{[0,1]}(s)$ has non-trivial ECE and Brier error against archived labels; the largest displayed raw errors occur in embedding and obfuscated slices, descriptively rather than causally. Channel-level isotonic and Platt maps are fit on pooled development pairs~\cite{minderer2021revisiting} and evaluated on untouched slices; isotonic mapping reduces both metrics throughout. Fig.~\ref{fig:qualitative_threshold_cases} illustrates score corrections near policy thresholds; only examples whose raw and calibrated actions lie on opposite sides of the printed threshold constitute reversals.

\begin{figure}[t]
\centering
\includegraphics[width=\linewidth]{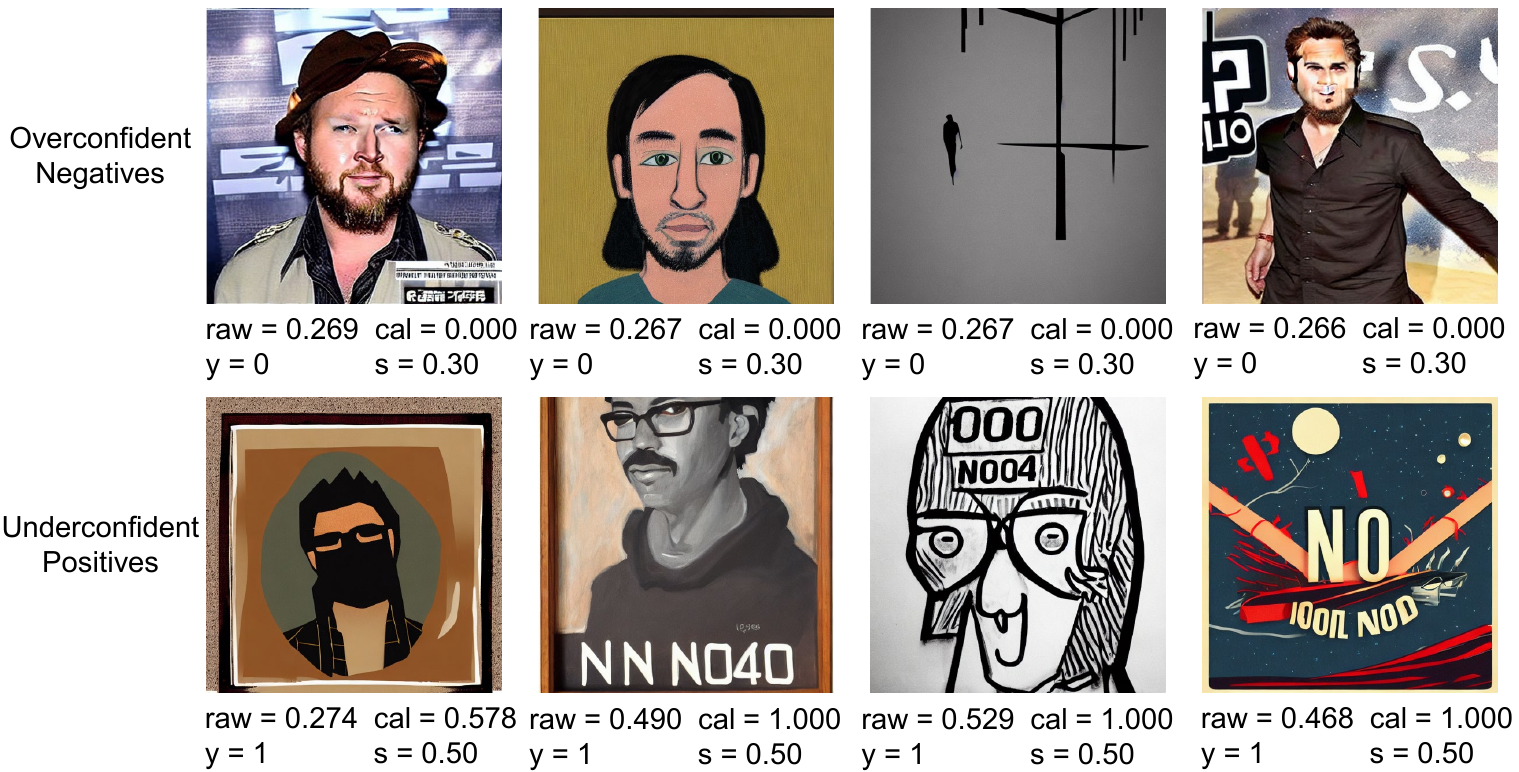}
\caption{Qualitative examples of near-threshold calibration corrections.}
\Description{Eight generated-image examples arranged in two rows of four. The top row shows overconfident negative samples with reference label 0, raw scores from 0.266 to 0.269, calibrated scores of 0.000, and a threshold of 0.30. The bottom row shows underconfident positive samples with reference label 1 and a threshold of 0.50; calibration moves three of the four samples from below to above the threshold, while the remaining sample stays above it.}
\label{fig:qualitative_threshold_cases}
\end{figure}

Score mappings may shift with style or obfuscation~\cite{kumar2019verified}. These held-out slices use channel-level maps trained on mixed-protocol development data, so they do not demonstrate transfer from standard to shifted data. Fig.~\ref{fig:calibration_multi}(d) shows pairwise disagreement among actions from the raw, isotonic, and Platt mappings near intermediate thresholds. It is not oracle correctness; ECE and Brier assess reliability against archived labels.

\section{Conclusion} \label{sec-conclusion}
We presented CLRC, a concept-level probabilistic risk and calibration framework for diffusion models that unifies structured risk estimation, comparative analysis, recorded-condition evaluation, and calibration-aware decision analysis. By modeling concept behaviors as stochastic semantic events across backbones, channels, protocol distributions, and condition states, CLRC compares where risk appears, how it shifts across interfaces and architectures, and how calibration changes thresholded actions. Experiments show nonuniform concept risk, exposure missed by evaluation using only standard prompts, and the need to report post-condition suppression, bypass, and spillover jointly. These findings position concept-resolved and calibration-aware evaluation as a necessary basis for more auditable, reproducible, and policy-relevant governance of diffusion foundation models.

\clearpage

\begin{acks}
This work was supported in part by the National Natural Science Foundation of China under Grant 62522112, the Outstanding Youth Fund Program of Jiangxi Province under Grant 20252BAC220008, the Jiangxi Key Research and Development Program under Grant 20261BCE310050, the Ganpo Talent Program of Jiangxi Province under Grant gpyc20240012, and the China Scholarship Council Program under Grant 202506830129.
\end{acks}

\bibliographystyle{ACM-Reference-Format}
\balance
\bibliography{bibliography}


\appendix

\section{Theoretical Remarks} \label{sec-tr}


\subsection{Calibration, Observation Noise, and Governance Bounds} \label{sec-psrt-calibration}

Risk profiling assumes access to measurable concept events. In practice, events are observed through imperfect judges. Let $Y_k^{(t)}$ denote the latent semantic event, represented by a human-reference annotation on audited samples, and let $E_k^{(t)}$ denote the separately thresholded judge event. Under the conditionally symmetric model, for $0\le\eta_k<1/2$,
\[
\begin{aligned}
\mathbb P(E_k^{(t)}=1\mid Y_k^{(t)}=0)&=\eta_k,\\
\mathbb P(E_k^{(t)}=0\mid Y_k^{(t)}=1)&=\eta_k,
\end{aligned}
\]
let $R^{(t)\star}=\mathbb{E}[Y_k^{(t)}]$ and $R_{\mathrm{obs}}^{(t)}=\mathbb{E}[E_k^{(t)}]$. Then
\begin{equation}
\begin{cases}
R_{\mathrm{obs}}^{(t)}(k,m,\chi)
= (1-2\eta_k)\,R^{(t)\star}(k,m,\chi)
+ \eta_k, \\[6pt]
\left|
R_{\mathrm{obs}}^{(t)}(k,m,\chi)
- R^{(t)\star}(k,m,\chi)
\right|
\le \eta_k.
\end{cases}
\label{eq-risk-noise-structured}
\end{equation}
Thus, concept-dependent judge noise can systematically distort absolute risk. Equation~\eqref{eq-risk-noise-structured} is a conditional model; seed-level confidence intervals quantify sampling uncertainty and do not verify that $\eta_k$ is constant across model, channel, protocol, or intervention.

Calibration is performed on sample-level score--label pairs. Let $j$ index a pair, with channel $\chi_j$, label $y_j$, cosine score $s_j$, fixed raw proxy $r_j=\operatorname{clip}_{[0,1]}(s_j)$, and calibrated probability $q_j=h_{\phi,\chi_j}(r_j)$. For a held-out configuration slice $\mathcal I_c$, define $N_{\mathrm{cal},c}=|\mathcal I_c|$, $\widehat R_{\mathrm{cal}}(c)=N_{\mathrm{cal},c}^{-1}\sum_{j\in\mathcal I_c}q_j$, and the finite human-reference frequency $\widehat R_{\mathrm H}(c)=N_{\mathrm{cal},c}^{-1}\sum_{j\in\mathcal I_c}y_j$. With $\mathrm{BS}_c=N_{\mathrm{cal},c}^{-1}\sum_{j\in\mathcal I_c}(q_j-y_j)^2$, Cauchy--Schwarz gives
\begin{equation}
\left|\widehat R_{\mathrm{cal}}(c)-\widehat R_{\mathrm H}(c)\right|
\le \sqrt{\frac{1}{N_{\mathrm{cal},c}}\sum_{j\in\mathcal I_c}(q_j-y_j)^2}
=\sqrt{\mathrm{BS}_c} .
\label{eq-brier-risk-bound}
\end{equation}

For a policy threshold $a\in(0,1)$, define $\mathcal{A}_{\mathrm{cal}}=\mathbb{I}\{\widehat R_{\mathrm{cal}}(c)\ge a\}$ and $\mathcal{A}_{\mathrm H}=\mathbb{I}\{\widehat R_{\mathrm H}(c)\ge a\}$. If $\widehat R_{\mathrm H}(c)\neq a$, then
\begin{equation} 
\mathbb{I}\{\mathcal{A}_{\mathrm{cal}}\neq\mathcal{A}_{\mathrm H}\}
\le \frac{\left|\widehat R_{\mathrm{cal}}(c)-\widehat R_{\mathrm H}(c)\right|}
{\left|\widehat R_{\mathrm H}(c)-a\right|}.
\label{eq-decision-bound-raw} 
\end{equation} 
This bound becomes vacuous near the policy boundary, motivating an explicit margin condition.

\medskip \noindent\textbf{Margin Assumption.} There exists $\gamma>0$ such that $|\widehat R_{\mathrm H}(c)-a|\ge\gamma$. Then
\begin{equation}
\mathbb{I}\{\mathcal{A}_{\mathrm{cal}}\neq\mathcal{A}_{\mathrm H}\}
\le \frac{\left|\widehat R_{\mathrm{cal}}(c)-\widehat R_{\mathrm H}(c)\right|}{\gamma}
\le \frac{\sqrt{\mathrm{BS}_c}}{\gamma}.
\label{eq-decision-bound-final} 
\end{equation} 
This is a configuration-level human-reference statement. Experimentally, raw and calibrated sample actions use $\mathbb I\{r_j\ge a\}$ and $\mathbb I\{q_j\ge a\}$; their disagreement quantifies action sensitivity, not correctness.

\subsection{Structured Stability and Judge-Noise Robustness} \label{sec-psrt-proposition}
We now move from the symmetric noise model in Eq.~(\ref{eq-risk-noise-structured}) to a more general class-conditional label-noise model with false positive and false negative rates $(\alpha_k,\beta_k)$. The symmetric model is recovered when $\alpha_k=\beta_k=\eta_k$. 
\begin{proposition}[Structured Stability via Wasserstein and Margin Mass] \label{prop-structured-stability} Let $\rho>0$ and $P,Q\in\mathcal P_1(\mathcal X)$, so $W_1(P,Q)<\infty$. Assume $f_k$ is $L_k$-Lipschitz with respect to the underlying metric. Then
\begin{equation}
\begin{aligned} 
&\left| R^{(t)}(k;P)-R^{(t)}(k;Q) \right| \\
&\le \frac{L_k}{2\rho} W_1(P,Q) + P\big(|f_k(x)-\tau_k|\le \rho\big) + Q\big(|f_k(x)-\tau_k|\le \rho\big).
\end{aligned}
\label{eq-structured-stability}
\end{equation}
\end{proposition}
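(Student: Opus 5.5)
We use the ramp $\psi_{\tau_k,\rho}$ from Sec.~\ref{sec-psrt-tensor} to replace the indicator $E_k(x)=\mathbb I\{f_k(x)\ge\tau_k\}$ with a Lipschitz surrogate $g(x)=\psi_{\tau_k,\rho}(f_k(x))$. The argument is a three-term triangle inequality:
\[
|R^{(t)}(k;P)-R^{(t)}(k;Q)|\le \big|\mathbb E_P[E_k-g]\big| + \big|\mathbb E_P[g]-\mathbb E_Q[g]\big| + \big|\mathbb E_Q[g-E_k]\big|.
\]
Here $R^{(t)}(k;P)=\mathbb E_P[E_k]$, since for $t\in\mathcal T$ the event $E_k^{(t)}$ coincides with the thresholded judge event $E_k$ (Sec.~\ref{sec-event-family}).

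\textbf{Outer terms.} By construction $\psi_{\tau,\rho}(u)$ and $\mathbb I\{u\ge\tau\}$ both lie in $[0,1]$ and agree whenever $|u-\tau|>\rho$. They can differ only on $\{|u-\tau|\le\rho\}$, and there by at most $1$. Hence $|E_k(x)-g(x)|\le \mathbb I\{|f_k(x)-\tau_k|\le\rho\}$ pointwise. Taking expectations gives
\[
\big|\mathbb E_P[E_k-g]\big|\le P(|f_k-\tau_k|\le\rho),
\]
and the same bound holds for $Q$. These produce the two margin-mass terms.

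\textbf{Middle term.} The composition $g=\psi_{\tau_k,\rho}\circ f_k$ is Lipschitz with constant $\mathrm{Lip}(\psi)\cdot L_k$, using the ramp's stated constant $1/(2\rho)$. Both measures lie in $\mathcal P_1(\mathcal X)$, and $g$ is bounded and Lipschitz, so both expectations are finite. Kantorovich--Rubinstein duality,
\[
W_1(P,Q)=\sup_{\mathrm{Lip}(h)\le1}\big(\mathbb E_P h-\mathbb E_Q h\big),
\]
applied to $h=g/\mathrm{Lip}(g)$ then gives $|\mathbb E_Pg-\mathbb E_Qg|\le \tfrac{L_k}{2\rho}W_1(P,Q)$. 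Summing the three pieces yields Eq.~\eqref{eq-structured-stability}.

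\textbf{Main obstacle: the Lipschitz constant.} A ramp that rises from $0$ to $1$ over the interval $[\tau-\rho,\tau+\rho]$ of length $2\rho$ has slope exactly $1/(2\rho)$, which matches the paper's stated constant and the factor in the Proposition, so the construction is consistent. The one point to check is the ramp definition itself: if it rose over a width-$\rho$ window instead, the constant would be $1/\rho$ and the window would need adjusting, while the margin sets $\{|f_k-\tau_k|\le\rho\}$ would still contain the disagreement region.

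\textbf{Routine checks.} Measurability of $f_k$ follows from its continuity. Duality needs $\mathcal X$ to be a Polish (or at least separable metric) space, which I would state as a standing assumption. Beyond these, the argument uses nothing from earlier in the paper other than the ramp's properties.
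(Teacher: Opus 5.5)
Your proposal is correct and follows the paper's proof: you use the same ramp surrogate $\psi_{\tau_k,\rho}$, the same add-and-subtract split into two margin-mass terms plus a middle term, and the same Kantorovich--Rubinstein step with Lipschitz constant $L_k/(2\rho)$. As a minor aside, the inequality $|\mathbb{E}_P g-\mathbb{E}_Q g|\le \mathrm{Lip}(g)\,W_1(P,Q)$ follows directly by integrating $g(x)-g(y)$ against an arbitrary coupling, so you do not need a Polish-space assumption, and the case $L_k=0$ is trivial rather than needing division by $\mathrm{Lip}(g)$.
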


\begin{proof}
Let $\psi_{\tau_k,\rho}$ be the ramp surrogate defined in the main paper. Adding and subtracting its expectations gives
\begin{align*}
|P(E_k)-Q(E_k)|
&\le P(|f_k-\tau_k|\le\rho)+Q(|f_k-\tau_k|\le\rho)\\
&\quad+\left|\mathbb{E}_P[\psi_{\tau_k,\rho}(f_k)]
-\mathbb{E}_Q[\psi_{\tau_k,\rho}(f_k)]\right|.
\end{align*}
Because $\psi_{\tau_k,\rho}\circ f_k$ is $L_k/(2\rho)$-Lipschitz, Kantorovich--Rubinstein duality bounds the final term by $\frac{L_k}{2\rho}W_1(P,Q)$.
\end{proof}

Proposition~\ref{prop-structured-stability} refines the worst-case TV bound into a distribution-shift term and the probability mass near the semantic boundary $|f_k(x)-\tau_k|\le\rho$. Borderline generations are intrinsically less stable because small shifts can flip the thresholded event.

\begin{proposition}[Relative Risk Invariance under Configuration-Independent Noise] \label{prop-relative-invariance} Assume $(\alpha_k,\beta_k)$ are configuration-independent. For any configurations $c_1$ and $c_2$,
\begin{equation}
R_{\mathrm{obs},c_1}^{(t)}-R_{\mathrm{obs},c_2}^{(t)}
=(1-\alpha_k-\beta_k)\big(R_{c_1}^{(t)\star}-R_{c_2}^{(t)\star}\big).
\label{eq-relative-invariance}
\end{equation}
In particular, the sign of the comparative risk difference is preserved whenever $1-\alpha_k-\beta_k>0$.
\end{proposition}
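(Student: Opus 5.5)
The plan is to write each observed risk as a function of the human-reference risk using the law of total probability. After that, the identity in Eq.~\eqref{eq-relative-invariance} follows by subtracting the two expressions.

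Fix $(k,t)$ and a configuration $c$. Conditioning on the reference event $Y_k^{(t)}$ under $P_c$, and using the class-conditional model in Eq.~\eqref{eq-detector-bias}, gives
\begin{equation*}
R_{\mathrm{obs},c}^{(t)}=\mathbb{P}_c(E_k^{(t)}=1)=(1-\beta_k)\,R_c^{(t)\star}+\alpha_k\big(1-R_c^{(t)\star}\big)=(1-\alpha_k-\beta_k)\,R_c^{(t)\star}+\alpha_k .
\end{equation*}
This is the affine relation already noted in Sec.~\ref{sec-psrt-comparative}. It reduces to Eq.~\eqref{eq-risk-noise-structured} when $\alpha_k=\beta_k=\eta_k$.

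The crucial step is the use of configuration independence. Because $\alpha_k$ and $\beta_k$ are the same under $P_{c_1}$ and $P_{c_2}$, the intercept $\alpha_k$ and the slope $1-\alpha_k-\beta_k$ are also shared. Writing the relation once for $c_1$ and once for $c_2$ and subtracting, the intercept $\alpha_k$ cancels. What remains is exactly Eq.~\eqref{eq-relative-invariance}.

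For the sign claim, I multiply the reference gap by the positive constant $1-\alpha_k-\beta_k$. Multiplication by a positive number preserves strict sign, including zero. Hence $\operatorname{sign}(R_{\mathrm{obs},c_1}^{(t)}-R_{\mathrm{obs},c_2}^{(t)})=\operatorname{sign}(R_{c_1}^{(t)\star}-R_{c_2}^{(t)\star})$.

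No step is a genuine obstacle. The only care needed is to keep the noise assumption at the level of conditional probabilities under each configuration's own law $P_c$. This makes the total-probability expansion valid slice by slice.

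It is also worth remarking why the hypothesis is needed. If the rates varied with $c$, the intercepts would differ and would not cancel. One would then get the residual term $b_{c_2}-b_{c_1}$ discussed in the main text.
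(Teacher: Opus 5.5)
Your proposal is correct and follows the paper's own proof. Like the paper, you write each configuration's observed risk as the affine map $R_{\mathrm{obs},c}=\alpha_k+(1-\alpha_k-\beta_k)R_c^\star$, subtract across $c_1$ and $c_2$ so that $\alpha_k$ cancels, and read off sign preservation from the positive slope; the only addition is that you derive the affine relation explicitly by the law of total probability, which the paper takes as given.
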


\begin{proof}
Under the stated condition, each configuration obeys $R_{\mathrm{obs},c}=\alpha_k+(1-\alpha_k-\beta_k)R_c^\star$. Subtracting the two affine equations cancels $\alpha_k$ and yields Eq.~\eqref{eq-relative-invariance}.
\end{proof}

If the rates depend on configuration $c$, then
\begin{equation}
R_{\mathrm{obs},c}=R_c^\star+b_c,\qquad
b_c=\alpha_c(1-R_c^\star)-\beta_cR_c^\star,
\label{eq-configuration-dependent-noise}
\end{equation}
and $(R_{\mathrm{obs},c_1}-R_{\mathrm{obs},c_2})-(R_{c_1}^\star-R_{c_2}^\star)=b_{c_1}-b_{c_2}$. Thus a sufficient condition for sign preservation is $|R_{c_1}^\star-R_{c_2}^\star|>|b_{c_1}-b_{c_2}|$. Proposition~\ref{prop-relative-invariance} is therefore a conditional robustness result, not evidence that the fixed judge has configuration-invariant error.

\section{Experimental Supplementary Materials} \label{sec-tse}

\subsection{Reproducible Protocol}
The core manifest contains $64$ concepts: $16$ each in the identity-related, copyright-sensitive, unsafe/NSFW-sensitive, and core benign-control families used in the main paper. Each entry is associated with a unique identifier, prompt descriptor, learned textual-inversion token, embedding path, and source-image directory. Spillover uses a disjoint auxiliary set $\mathcal K_{\mathrm{ben}}$ of $20$ generic benign concepts, so the complete evaluation domain is the union of the core manifest and $\mathcal K_{\mathrm{ben}}$.

For the \textbf{prompt channel}, the \emph{standard}, \emph{shifted}, and \emph{obfuscated} protocols each contain $5$ templates per concept. For the \textbf{embedding channel}, the learned token (e.g., \texttt{<n000002*>}) is held fixed and inserted into one protocol-specific outer form. The standard form is ``a photo of \texttt{<token>}'', while shifted and obfuscated forms modify the wrapper around the same token. Because prompt and embedding outer forms differ, their gap compares declared deployment distributions rather than a causal descriptor-to-token substitution.

For each configuration $(k,m,\chi,\pi,\mathcal{D})$, we generate $N=50$ images using the fixed seeds $\{0,1,\dots,49\}$. The conditioning item is drawn from $\mathcal{D}$ before generation, so $N=50$ is the total per cell, not $50$ per template. SD1.5/2.1 use $512\times512$, $30$ steps, and guidance $7.5$; SDXL uses $1024\times1024$, $30$ steps, and guidance $7.0$. Thus, within one fixed model--channel--protocol--condition slice, the $48$ target concepts contribute $n=2{,}400$ outputs, the $16$ core benign controls contribute $n=800$, and the $20$ auxiliary spillover controls contribute $n=1{,}000$.

Concept events are scored with the fixed CLIP judge \nolinkurl{openai/clip-vit-large-patch14}. For concept $k$, $s_{ik}=f_k(x_i)$ is cosine similarity, $r_{ik}=\operatorname{clip}_{[0,1]}(s_{ik})$ is the fixed raw probability proxy, and $E_k(x_i)=\mathbf 1\{s_{ik}\ge\tau_k\}$ is the operational event. Calibration targets the separate human-reference annotation $y_{ik}=Y_k(x_i)$. Each core concept contributes $30$ prompt-channel and $30$ embedding-channel images, totaling $3{,}840$; the $20$ auxiliary benign concepts contribute an additional $1{,}200$ labels with the same channel allocation, used only to fit and verify their spillover-event thresholds. A concept--channel-stratified $70/30$ split is fixed in each pool; no image appears in both partitions. Core thresholds and two channel-level calibrators pooled across concepts and available protocol slices use only the $2{,}688$-pair core development partition. Reliability diagrams, ECE, and Brier use only the $1{,}152$-pair core held-out partition, sliced by the reported configuration. The default $\tau_k$ maximizes development F1; the ablation targets development $\mathrm{FPR}=0.05$.

The archived aggregate distinguishes $\pi_{\varnothing}$ from one opaque, indivisible post-condition labeled $\pi_{\mathrm{post}}$. The label is an archive-stable condition key, not a component specification, and results support no checkpoint-, filter-, or optimizer-level attribution. All paired cells reuse the same protocols and seed identifiers. For each $b\in\mathcal K_{\mathrm{ben}}$, first compute $\widehat D_b=[\widehat R_b^{\varnothing}-\widehat R_b^{\pi_{\mathrm{post}}}]_+$ from its two $N=50$ cells, and then average $\overline D=|\mathcal K_{\mathrm{ben}}|^{-1}\sum_b\widehat D_b$. The indicator $\mathsf S_b(\epsilon)=\mathbf 1\{\widehat D_b\ge\epsilon\}$ uses $\epsilon=0.05$; because empirical frequencies move in steps of $0.02$, it first triggers at a drop of $0.06$. Main-paper Table~2 and Fig.~3(d) report $\overline D$, not the binary rate.

For probabilistic calibration, $q_{ik}=h_{\phi,\chi}(r_{ik})$ maps the raw proxy to a sample probability. Isotonic regression is the primary monotone fit and Platt scaling, $q=\sigma(ar+b)$, is the parametric baseline; no neural risk network is trained. One map per channel pools development pairs across concepts and available protocol slices, then remains frozen for all held-out configuration slices. Shifted and obfuscated rows therefore assess this pooled multi-protocol calibrator, not standard-only transfer. Governance scans compare $\mathbf 1\{r_{ik}\ge a\}$ with $\mathbf 1\{q_{ik}\ge a\}$; disagreement measures sensitivity, while held-out ECE/Brier measure probability reliability.

\subsection{Experimental Setup and Evaluation Protocols}\label{sec-exp-esep}
\textit{Sampling and Event Construction.} Fixed seeds produce empirical Bernoulli frequencies under each output law. Reproduction denotes the nominal baseline or matched standard post-intervention distribution. Standard bypass uses a distinct post-intervention attack distribution, and obfuscated bypass uses its obfuscated variant. Spillover is the population-level benign-risk diagnostic above, not a sample-level event type. Operational event risks are estimated by sample averaging and reported with Wilson intervals.


\textit{Judge and Calibration Split.} The fixed primary judge is \nolinkurl{openai/clip-vit-large-patch14}. Its thresholded score defines operational event frequency, while calibration targets separate human-reference annotations. The $64$ core concepts each contribute $30$ prompt and $30$ embedding images ($3{,}840$ labels); the $20$ auxiliary benign concepts contribute the same $60$-image allocation ($1{,}200$ labels) solely for their spillover-event thresholds and are excluded from the channel calibrators and Table~\ref{tab:calibration_governance_summary}. For both pools, a concept--channel-stratified $70\%/30\%$ split is fixed. Core development pairs fit the concept thresholds and two channel-level calibrators pooled across available protocol slices; core held-out pairs are used only for reliability diagrams, ECE, and Brier evaluation. Thus shifted and obfuscated rows assess a pooled multi-protocol calibrator, not transfer of a standard-only calibrator.

An operational Bernoulli trial is positive exactly when $E_k(x)=\mathbb I\{s_k(x)\ge\tau_k\}=1$; $Y_k$ remains a separate human-reference label for development and held-out reliability evaluation. For example, $12$ positive outcomes among $50$ seeded generations in one fixed cell give an empirical operational risk of $12/50=0.24$; the same seed identifiers are reused only across paired cells. The default $\tau_k$ maximizes development F1, and the supplement reports an FPR@0.05 alternative. These tensor values remain conditional on the fixed CLIP judge: the available calibration analysis does not establish configuration-invariant FPR/FNR or eliminate judge-specific bias.

If $S$ configuration cells are instantiated, the audit requires $SN$ generations and fixed-judge evaluations; cells and seeds are computationally parallelizable, while matched-seed comparisons remain statistically paired. Risk aggregation is linear in $SN$. Post-hoc calibration acts only on the labeled pairs. This states analytical scaling and parallelism, not measured wall-clock or hardware efficiency.


\textit{Judges, Thresholds, and Calibration Data.} The reported experiments use the fixed CLIP judge identified above; no unlisted specialized detector is assumed. Concept thresholds and post-hoc maps are fitted only on development annotations. Raw metrics use $r$, calibrated metrics use $q$, and all reliability diagrams, ECE values, and Brier values use the held-out partition.

\textit{Comparative Metrics and Governance Evaluation.} The pipeline outputs a risk tensor indexed by concept, model, channel, intervention, protocol, and event type. We compute absolute risks and the signed channel/model gaps defined in Sec.~3.2. Governance analysis scans policy thresholds $a\in(0,1)$ and compares actions induced by $r$ and $q$.

\textit{Implementation and traceability.} The evaluation pipeline separates protocol construction, generation, judging, risk estimation, post-hoc calibration, threshold scanning, tensor merging, and report generation. Aggregation cells are keyed by model, channel, archived condition label, protocol, image size, guidance scale, inference steps, and seed set. The opaque $\pi_{\mathrm{post}}$ key identifies an end-to-end condition only; no result is interpreted as a component- or method-independent intervention effect.

\textit{Computational scaling.} Let $S$ be the number of instantiated configuration cells and $L_{\mathrm{dev}}$ the number of labeled core development pairs used by a channel calibrator. Generation and fixed-judge evaluation require $SN$ calls and are computationally parallel over cells and seeds, but matched contrasts reuse seed identifiers and must preserve this pairing statistically; risk aggregation is $O(SN)$. Isotonic fitting costs $O(L_{\mathrm{dev}}\log L_{\mathrm{dev}})$ including score sorting and linear-time pool-adjacent-violators fitting, while Platt fitting is $O(L_{\mathrm{dev}})$ per optimization iteration. Here $N=50$, $L_{\mathrm{dev}}=2{,}688$, and the total labeled core pool is $L_{\mathrm{core}}=3{,}840$. These are analytical complexity statements rather than empirical wall-clock or hardware-efficiency claims.

\textit{Judge threat boundary.} Shifted and obfuscated protocols stress the generator and intervention pipeline, not an adaptive attack on the judge. A judge-aware attacker may seek human-positive outputs with $Y_k=1$ but $E_k=0$, producing configuration-dependent false negatives that calibration and seed intervals cannot rule out. The held-out annotations and threshold-selection ablation support probability and threshold analysis, but do not by themselves certify judge robustness across configurations or against a second judge. Reported tensor entries should therefore be read as operational risks under the named fixed judge.

\textit{Empirical and artifact scope.} Experiments cover SD1.5, SD2.1, and SDXL only. For a closed API, the schema can audit observable prompt--output cells, while unavailable embedding, checkpoint, seed-control, or intervention axes must be marked unavailable rather than assigned zero risk. Exact reproduction additionally requires the concept and prompt manifest, per-sample scores and human labels, split identifiers, thresholds, calibrator metadata, model and condition revisions, and aggregation scripts. The reported results are therefore scoped to the named aggregate slices and the opaque $\pi_{\mathrm{post}}$ condition, without component-level intervention attribution.

\subsection{Audit Counts and Descriptive Marginal Intervals} \label{app:uncertainty_seed_stability}

Table~\ref{tab:appendix_uncertainty_seed} restates the non-intervention, standard-protocol slice for all three backbones using one common configuration rule. Target rows pool $48\times50=2{,}400$ fixed-judge outcomes, while core-benign rows pool $16\times50=800$. The same seed identifiers are reused between the prompt and embedding cells.

\begin{table*}[t]
\centering
\caption{Matched standard-protocol reproduction-frequency summary under $\pi_{\varnothing}$. Wilson intervals are pooled-image working-independence descriptors conditional on the fixed judge; they are not concept-clustered or matched-seed paired confidence intervals. Gap vs P is a point difference only.}
\label{tab:appendix_uncertainty_seed}
\small
\setlength{\tabcolsep}{5pt}
\renewcommand{\arraystretch}{0.96}
\begin{tabular}{llllccccc}
\toprule
\textbf{Backbone} & \textbf{Channel} & \textbf{Prot.} & \textbf{Scope} & \textbf{$n$} & \textbf{$\hat{R}^{(\mathrm{rep})}$} & \textbf{95\% Wilson} & \textbf{HW} & \textbf{Gap vs P} \\
\midrule
\multirow{4}{*}{SD1.5}
& P & Std & Target mean & 2400 & 0.320 & [0.302, 0.339] & 0.019 & -- \\
& E & Std & Target mean & 2400 & 0.577 & [0.557, 0.596] & 0.020 & 0.257 \\
& P & Std & Core benign & 800 & 0.070 & [0.054, 0.090] & 0.018 & -- \\
& E & Std & Core benign & 800 & 0.100 & [0.081, 0.123] & 0.021 & 0.030 \\
\midrule
\multirow{4}{*}{SD2.1}
& P & Std & Target mean & 2400 & 0.263 & [0.246, 0.281] & 0.018 & -- \\
& E & Std & Target mean & 2400 & 0.510 & [0.490, 0.530] & 0.020 & 0.247 \\
& P & Std & Core benign & 800 & 0.060 & [0.046, 0.079] & 0.017 & -- \\
& E & Std & Core benign & 800 & 0.090 & [0.072, 0.112] & 0.020 & 0.030 \\
\midrule
\multirow{4}{*}{SDXL}
& P & Std & Target mean & 2400 & 0.197 & [0.181, 0.213] & 0.016 & -- \\
& E & Std & Target mean & 2400 & 0.437 & [0.417, 0.457] & 0.020 & 0.240 \\
& P & Std & Core benign & 800 & 0.050 & [0.037, 0.067] & 0.015 & -- \\
& E & Std & Core benign & 800 & 0.080 & [0.063, 0.101] & 0.019 & 0.030 \\
\bottomrule
\end{tabular}
\end{table*}

The point summaries preserve the two descriptive patterns reported in the main paper under a matched protocol: embedding frequencies exceed prompt frequencies for the target families, and both channel frequencies decrease from SD1.5 to SDXL. The target-family prompt-to-embedding point gaps are $0.257$, $0.247$, and $0.240$ for SD1.5, SD2.1, and SDXL, respectively; core-benign gaps are $0.030$.

The displayed Wilson intervals summarize marginal pooled frequencies only. Concepts induce clustering and paired cells reuse seeds, so the intervals are not used to test channel or backbone contrasts. Confirmatory contrast inference should instead use a concept-clustered, matched-seed bootstrap over per-concept paired records. All entries also remain conditional on the thresholded judge and do not establish configuration-invariant human-reference accuracy.

\subsection{Threshold-Selection Ablation for Concept Event Construction} \label{app:threshold_ablation}

The main paper resolves concept-event thresholds using an F1-maximization rule on the threshold-fitting split. To assess the sensitivity of the operational event comparison to this choice, we compare the default rule with a conservative alternative targeting development $\mathrm{FPR}=0.05$. Figure~\ref{fig:appendix_threshold_ablation}(a) anchors the default fixed-judge frequencies, and panel (b) compares the prompt-to-embedding gap under the two event-threshold rules. Human-reference probability calibration is logically separate from this ablation.
\begin{figure*}[t]
\centering
\begingroup
\definecolor{aCPrompt}{HTML}{4C78A8}
\definecolor{aCEmbed}{HTML}{F28E2B}
\definecolor{aCFpr}{HTML}{E0A11A}
\definecolor{aCRaw}{HTML}{B07AA1}
\definecolor{aCIso}{HTML}{59A14F}
\pgfplotsset{every axis/.append style={tick label style={font=\scriptsize},label style={font=\small},title style={font=\small},legend style={font=\scriptsize,draw=none,fill=none},axis line style={black!45},tick style={black!45},grid=major,grid style={white},axis background/.style={fill=black!4}}}
\resizebox{0.9\linewidth}{!}{%
\begin{tikzpicture}
\begin{groupplot}[
  group style={group size=3 by 1,horizontal sep=1.15cm},
  width=6.25cm,height=5.15cm,enlarge x limits=0.19,
  symbolic x coords={1.5-Std,2.1-Shf,XL-Obf},xtick=data,
  ybar,/tikz/bar width=8pt,scaled y ticks=false]
\nextgroupplot[
  title={(a) Default F1-max event frequency},
  ylabel={$\widehat R^{(\mathrm{rep})}$},ymin=0,ymax=.61,
  ytick={0,.1,.2,.3,.4,.5,.6},
  legend columns=2,legend style={at={(0.5,-0.27)},anchor=north}]
\addplot+[fill=aCPrompt,draw=aCPrompt,nodes near coords={\pgfmathprintnumber[fixed,precision=2]{\pgfplotspointmeta}},every node near coord/.append style={font=\scriptsize}] coordinates {(1.5-Std,.30) (2.1-Shf,.26) (XL-Obf,.21)};
\addplot+[fill=aCEmbed,draw=aCEmbed,nodes near coords={\pgfmathprintnumber[fixed,precision=2]{\pgfplotspointmeta}},every node near coord/.append style={font=\scriptsize}] coordinates {(1.5-Std,.55) (2.1-Shf,.48) (XL-Obf,.44)};
\legend{Prompt,Embedding}
\nextgroupplot[
  title={(b) Operational channel gap},
  ylabel={$\Delta_{\mathrm P\to\mathrm E}$},ymin=0,ymax=.28,
  ytick={0,.05,.10,.15,.20,.25},
  yticklabel style={/pgf/number format/fixed,/pgf/number format/precision=2},
  legend columns=2,legend style={at={(0.5,-0.27)},anchor=north}]
\addplot+[fill=aCPrompt,draw=aCPrompt,nodes near coords={\pgfmathprintnumber[fixed,precision=2]{\pgfplotspointmeta}},every node near coord/.append style={font=\scriptsize,xshift=-3pt}] coordinates {(1.5-Std,.25) (2.1-Shf,.22) (XL-Obf,.23)};
\addplot+[fill=aCFpr,draw=aCFpr,nodes near coords={\pgfmathprintnumber[fixed,precision=2]{\pgfplotspointmeta}},every node near coord/.append style={font=\scriptsize,xshift=3pt}] coordinates {(1.5-Std,.24) (2.1-Shf,.22) (XL-Obf,.23)};
\legend{F1-max,FPR@0.05}
\nextgroupplot[
  title={(c) Held-out ECE ($\tau$-independent)},
  ylabel={ECE},ymin=0,ymax=.18,
  ytick={0,.04,.08,.12,.16},
  yticklabel style={/pgf/number format/fixed,/pgf/number format/precision=2},
  symbolic x coords={1.5-P-Std,2.1-P-Shf,XL-E-Obf},xtick=data,
  legend columns=2,legend style={at={(0.5,-0.27)},anchor=north}]
\addplot+[fill=aCRaw,draw=aCRaw,nodes near coords={\pgfmathprintnumber[fixed,precision=3]{\pgfplotspointmeta}},every node near coord/.append style={font=\tiny}] coordinates {(1.5-P-Std,.114) (2.1-P-Shf,.129) (XL-E-Obf,.161)};
\addplot+[fill=aCIso,draw=aCIso,nodes near coords={\pgfmathprintnumber[fixed,precision=3]{\pgfplotspointmeta}},every node near coord/.append style={font=\tiny}] coordinates {(1.5-P-Std,.051) (2.1-P-Shf,.061) (XL-E-Obf,.074)};
\legend{Raw,Isotonic}
\end{groupplot}
\end{tikzpicture}}
\endgroup
\caption{Threshold-selection analysis. (a) Default F1-max operational reproduction frequencies for three representative slices. (b) Prompt-to-embedding operational gap under F1-max and development-FPR@0.05 rules. (c) Threshold-independent held-out ECE reference for the named Table~3 slices (SD1.5-P-Std, SD2.1-P-Shf, and SDXL-E-Obf): because calibration maps $r$ to the separate human label $Y$, these ECE values are not an effect of $\tau_k$.}
\Description{Three panels show default operational frequencies, channel gaps under two event-threshold rules, and threshold-independent held-out ECE for raw and isotonic probabilities.}
\label{fig:appendix_threshold_ablation}
\end{figure*}

Panel (a) reports the default F1-max event frequencies for the representative SD1.5-Std, SD2.1-Shf, and SDXL-Obf slices. It is an anchor for interpreting the event scale, not a second calibration target. In every displayed slice, the embedding operational frequency exceeds the prompt frequency.

Panel (b) recomputes the channel difference after threshold selection. The displayed gap changes from $0.25$ to $0.24$ for SD1.5-Std and remains $0.22$ and $0.23$ (to the reported precision) for SD2.1-Shf and SDXL-Obf. Thus the sign of the operational channel difference is unchanged in these representative slices. This is a threshold-sensitivity statement for the fixed judge, not a claim about human-reference prevalence.

Panel (c) is included only as a threshold-independent reliability reference and reproduces the corresponding held-out values from main-paper Table~3. The calibrator maps the continuous proxy $r$ to the separate human-reference label $Y$; changing $\tau_k$ changes the operational event $E_k$ but does not change $r$, $Y$, $q$, ECE, or Brier. Accordingly, no calibration improvement or degradation is attributed to the event-threshold rule.

\subsection{Threshold-Scan Summary for Calibration-Induced Action Sensitivity}
\label{app:threshold_scan_summary}

The main paper shows that raw-versus-calibrated sample actions differ most near intermediate policy thresholds. To compare this sensitivity across settings, we summarize each action-disagreement--threshold curve using three scalar statistics: its area under the curve (AUC), its peak, and the threshold at which that peak occurs. Figure~\ref{fig:appendix_threshold_scan} reports these descriptive quantities for representative configurations under both Raw-vs-Isotonic and Raw-vs-Platt comparisons.
\begin{figure*}[t]
\centering
\includegraphics[width=0.95\linewidth]{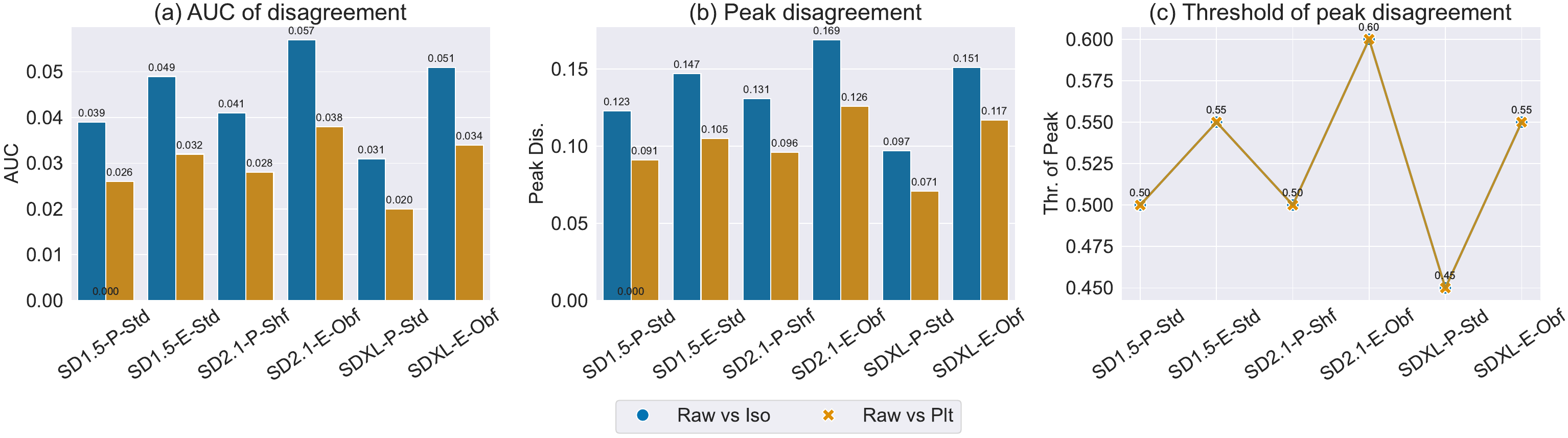}
\caption{Summary of calibration-induced action sensitivity over threshold scans. Raw-vs-Isotonic action disagreement is larger than Raw-vs-Platt in the displayed slices, and peak sensitivity occurs at intermediate thresholds ($\approx 0.45$--$0.60$). These rates measure action changes relative to raw scores, not decision accuracy or regret.}
\Description{Three panels summarize the area, peak value, and peak-threshold location of raw-versus-calibrated sample-action disagreement across representative configurations.}
\label{fig:appendix_threshold_scan}
\end{figure*}

Action sensitivity remains visible after aggregation over the full threshold scan. As shown in Fig.~\ref{fig:appendix_threshold_scan}(a), Raw-vs-Isotonic yields larger action-disagreement AUC than Raw-vs-Platt across the representative settings, indicating a larger departure from raw-score actions. The largest displayed AUC values occur in embedding-based and obfuscated slices; this is a descriptive slice comparison, not a causal attribution to channel or protocol.

The same pattern appears in the peak statistic. Figure~\ref{fig:appendix_threshold_scan}(b) shows that peak action disagreement is smallest for SDXL-P-Std and largest for SD2.1-E-Obf, with SD1.5-E-Std and SDXL-E-Obf also exhibiting elevated peaks. In the largest displayed slice, Raw-vs-Isotonic peak disagreement reaches roughly $0.17$, so calibration changes allow/flag/intervene outputs for a non-trivial fraction of samples. This statistic measures change relative to raw scores; it is not an accuracy or regret rate.

Figure~\ref{fig:appendix_threshold_scan}(c) shows that the peak action-disagreement threshold lies in the intermediate regime, approximately $0.45$--$0.60$, rather than near either extreme. Around the center of the score distribution, more samples lie close to the decision boundary, so moderate probability corrections can change downstream actions. At very low or very high thresholds, most samples fall decisively on one side of the cutoff, leaving less room for raw-versus-calibrated action disagreement.




\section{Governance Implications} \label{sec-governance}
The empirical results suggest that concept-level governance for diffusion foundation models cannot be reduced to a single global safety score or a one-shot content filter. Instead, governance must reflect the structured nature of semantic risk. In our experiments, risk varies systematically across concept families, conditioning channels, model architectures, and protocol distributions. Deployment-time oversight should therefore move from monolithic safety evaluation toward \emph{concept-resolved risk auditing}, in which governance decisions are tied to identifiable semantic events rather than undifferentiated model-level assessments. A model that appears safe on average may still retain substantial exposure on a small but governance-critical subset of concepts, and such exposure remains invisible without a structured risk tensor.

This framework also implies that governance must be both \emph{interface-aware} and \emph{comparative}. The consistent prompt--embedding gap shows that semantic risk depends not only on which concept is queried, but also on how it is accessed. Similarly, the evaluated newer checkpoints exhibit lower risk along some dimensions without eliminating it uniformly across access paths. Governance should therefore avoid relying on prompt-only benchmarks or aggregate claims about a model family. Instead, risk reports should distinguish conditioning channels, protocol distributions, and model variants explicitly, so that decisions are grounded in comparative evidence rather than nominal model identity alone. This is especially important for systems that expose both natural-language prompting and embedding-based extensibility, since the latter can materially reshape the effective risk surface.

The results further show that intervention evaluation must extend beyond average suppression on the intended target family. Effective auditing must also account for residual bypass risk and semantic spillover onto unrelated benign concepts. Intervention is therefore not a binary success condition: a control that reduces average target risk while leaving substantial obfuscated bypass, or while degrading benign semantic fidelity, remains governance-relevant even if it appears successful under a narrow benchmark. At minimum, intervention audits should jointly report target-family suppression, bypass under shifted or indirect protocols, and spillover on benign controls. Such reporting makes the trade-off between harm reduction and collateral capability loss explicit.

More broadly, governance evaluation should incorporate \emph{distribution shift} and \emph{stress-test protocols} as standard components rather than optional adversarial extensions. The main paper shows that both concept realization and intervention robustness degrade under shifted or obfuscated conditions, implying that conclusions drawn from standard prompting alone are likely to be overly optimistic. In deployment, users do not interact through a single canonical prompt form: they vary phrasing, use indirection, compose attributes, and access concepts through alternative interfaces. Shifted and obfuscated protocols should therefore be treated as part of the normal evaluation surface. Otherwise, a model may appear well controlled under nominal prompts while retaining substantial exposure under more realistic access patterns.

The calibration results add a further operational requirement: policy thresholds should be applied to \emph{calibrated concept-risk probabilities}, not raw detector scores or uncalibrated confidences. Miscalibration is amplified under shifted protocols and embedding-based access, and these errors translate directly into disagreement near governance thresholds. A score that is useful for ranking risky samples may still be unreliable for threshold-based policy actions. Governance pipelines should therefore separate \emph{scoring}, \emph{calibration}, and \emph{decision-making} as distinct stages, and threshold policies should be validated on calibrated probabilities under both nominal and stress-test distributions. In this framework, calibration is not merely a statistical refinement; it is what makes concept-level risk estimates interpretable enough to support stable downstream decisions.

These observations suggest a practical workflow for concept-level governance. A system should first estimate configuration-specific concept risks under the relevant conditioning channels and protocol distributions, fit the calibrator on development annotations, verify reliability on the untouched held-out partition, and only afterward apply predeclared policy thresholds or intervention rules. This separation clarifies three questions that are often conflated in practice: whether a concept can be realized at all, how likely that realization is under a concrete deployment protocol, and whether that estimated probability is reliable enough to justify an allow/flag/intervene decision. In this sense, calibration serves not only a statistical role but also a procedural one: it links empirical risk estimation to stable governance action.

\paragraph{Operational use.} CLRC is an audit protocol, not a safety intervention. An evaluator fixes the concept inventory, accessible channels, protocol distributions, intervention state, judge, and seeds; records operational events and separate human-reference pairs; reports reproduction, bypass, and benign-loss slices; fits calibration on development annotations; and verifies reliability on an untouched partition before applying predeclared policy thresholds. Any change to the model, judge, channel, intervention, or deployment protocol triggers renewed validation rather than automatic transfer of the previous report.









\end{document}